\theoremstyle{plain}
\newtheorem{theorem}{Theorem}[section]
\newtheorem{proposition}[theorem]{Proposition}
\newtheorem{corollary}[theorem]{Corollary}
\newtheorem{defin}[theorem]{Definition}
\newtheorem{remark}[theorem]{Remark}
\newtheorem{assumptions}[theorem]{Assumptions}
\newcommand{\notiz}[1]{\relax}
\newcommand{\zitep}[1]{\relax}
\newcommand{\1}{\mathds 1}            
\newcommand{\nn}{\mathds N}
\newcommand{\cc}{\mathds C}
\newcommand{\Price}[1][]{
		\ifthenelse{\equal{#1}{}}{\mathit{Price}}{\Price{}^{#1}}
	} 
\newlength{\wordlength}
\newcommand{\ul}{\underline}
\newcommand{\ol}{\overline}
\newcommand{\EE}{\mathbb{E}}
\newcommand{\dx}{\text{d}x}
\renewcommand{\cite}{\citet}
\numberwithin{equation}{section}
\numberwithin{figure}{section}
\numberwithin{table}{section}
\begin{document}
\title{\textbf{A new approach for American option pricing: The Dynamic Chebyshev method}
}

\bigskip
\author{\textbf{Kathrin Glau$\vphantom{l}^{1,2}$,} \textbf{Mirco Mahlstedt$\vphantom{l}^{2,*}$,} \textbf{Christian P{\"o}tz$\vphantom{l}^{1,2,}$\footnote{The authors thank the KPMG Center of Excellence in Risk Management for their support.}
}
\\\\$\vphantom{l}^{\text{1}}$Queen Mary University of London, UK\\
$\vphantom{l}^{\text{2}}$Technical University of Munich, Germany
}

\maketitle
\begin{abstract}
We introduce a new method to price American options based on Chebyshev interpolation. In each step of a dynamic programming time-stepping we approximate the value function with Chebyshev polynomials. The key advantage of this approach is that it allows to shift the model-dependent computations into an offline phase prior to the time-stepping. In the offline part a family of generalised (conditional) moments is computed by an appropriate numerical technique such as a Monte Carlo, PDE or Fourier transform based method. Thanks to this methodological flexibility the approach applies to a large variety of models. Online, the backward induction is solved on a discrete Chebyshev grid, and no (conditional) expectations need to be computed. For each time step the method delivers a closed form approximation of the price function along with the options' delta and gamma. Moreover, the same family of (conditional) moments yield multiple outputs including the option prices for different strikes, maturities and different payoff profiles. We provide a theoretical error analysis and find conditions that imply explicit error bounds for a variety of stock price models. Numerical experiments confirm the fast convergence of prices and sensitivities. An empirical investigation of accuracy and runtime also shows an efficiency gain compared with the least-square Monte-Carlo method introduced by Longstaff and Schwartz (2001).
\end{abstract}

\textbf{Keywords}
	American Option Pricing, Complexity Reduction, Dynamic Programming, Polynomial Interpolation
	
\noindent\textbf{2010 MSC} 91G60, 41A10  


\section{Introduction}
A challenging task for financial institutions is the computation of prices and sensitivities for large portfolios of derivatives such as equity options. Typically, equity options have an early exercise feature and can either be exercised at any time until maturity (American type) or at a set of pre-defined exercise dates (Bermudan type).
In lack of explicit solutions, different numerical methods haven been developed to tackle this problem.
One of the first algorithms to compute American put option prices in the Black-Scholes model has been proposed by \cite{BrenanSchwartz1977}.  In this approach, the related partial differential inequality is solved by a finite difference scheme. A rich literature further developing the PDE approach has accrued since, including methods for jump models (\cite{Levendorskii2004b}, \cite{HilberReichmannSchwabWinter2013}), extensions to two dimensions (\cite{HaentjensHout2015}) and combinations with complexity reduction techniques (\cite{HaasdonkSalomonWohlmuth2013}). Besides PDE based methods a variety of other approaches has been introduced, many of which trace back to the solution of the optimal stopping problem by the dynamic programming principle, see e.g. \cite{peskir2006optimal}.  For Fourier based solution schemes we refer to \cite{LordFangBervoetsOsterlee2008}, \cite{fang2009pricing}. Simulation based approaches are of fundamental importance, the most prominent representative of this group is the Least-squares Monte-Carlo (LSM) approach of \cite{longstaffschwartz}, we refer to \cite{glasserman2003monte} for an overview of different Monte-Carlo methods. Fourier and PDE methods typically are highly efficient, compared to simulation, however, they are less flexible towards changes in the model and particularly in the dimensionality.

In order to reconcile the advantages of the PDE and Fourier approach with the flexibility of Monte Carlo simulation, we propose a new approach. We consider a dynamic programming time-stepping. Let ${X_{t}}$ be the underlying Markov process and the value function $V_t$ is given by,
\begin{align*}
V_{T}(x)&=g(x)\\
V_{t}(x)&=f\left(g(t,x), \mathbb{E}[V_{t+1}(X_{t+1})\vert X_{t}=x]\right).
\end{align*}
with time steps $t<t+1<\ldots<T$ and payoff function $g$. The computational challenge is to compute  $\mathbb{E}[V_{t+1}(X_{t+1})\vert X_{t}=x]$ for \textit{for all time steps $t$ and all states $x$}, where $V_{t+1}$ depends on \textit{all previous time steps}.

In order to tackle this problem, we approximate the value function in each time step by Chebyshev polynomial interpolation. We thus express the value function as a finite sum of Chebyshev polynomials
\begin{align}\label{eq1}
\mathbb{E}[V_{t+1}(X_{t+1})\vert X_{t}=x] \,\approx  \sum c_j^{t+1} \mathbb{E}[T_j(X_{t+1})\vert X_{t}=x]
\end{align}
The choice of Chebyshev polynomials is motivated by the promising properties of Chebyshev interpolation such as
\begin{itemize}
\item The vector of coefficients $(c_j^{t+1})_{j=0,\ldots,N}$ is explicitly given as a linear combination of the values $V_t(x_k)$ at the Chebyshev grid points $x_k$. 
For this, equation \eqref{eq1} needs to be solved at the Chebyshev grid points $x=x_k$ only. 
\item Exponential convergence of the interpolation for analytic functions and polynomial convergence of differential functions depending on the order.
\item The interpolation can be implemented in a numerically stable way.
\end{itemize}

The computation of the continuation value at a single time step coincides with the pricing of a European option. Its interpolation with Chebyshev polynomials is proposed in \cite{GassGlauMahlstedtMair2018}, where the method shows to be highly promising and exponential convergence is established for a large set of models and option types. Moreover, the approximation of the value function with Chebyshev polynomials has already proven to be beneficial for optimal control problems in economics, see \cite{judd1998numerical} and \cite{CaiJudd2013}.

The key advantage of our approach for American option pricing is that it collects all model-dependent computations in the generalized conditional moments $\Gamma_{j,k}=\mathbb{E}[T_j(X_{t+1})\vert X_{t}=x_k]$. If there is no closed-form solution their calculation can be shifted into an offline phase prior to the time-stepping. Depending on the underlying model a suitable numerical technique such as Monte Carlo, PDE and Fourier transform methods can be chosen, which reveals the high flexibility of the approach. Once the generalized conditional moments $\Gamma_{j,k}$ are computed, the backward induction is solved on a discrete Chebyshev grid. Which avoids any computations of conditional expectations during the time-stepping. For each time step the method delivers a closed form approximation of the price function $x\mapsto\sum c_j^{t}T_j(x)$ along with the options' delta and gamma. Since the family of  generalized conditional moments $\Gamma_{j,k}$ are independent of the value function, they can be used to generate multiple outputs including the option prices for different strikes, maturities and different payoff profiles. The structure of the method is also beneficial for the calculation of expected future exposure which is the computational bottleneck in the computation of CVA, as investigated in \cite{GlauPachonPoetz2018}.

The offline-online decomposition separates model and payoff yielding a modular design. We exploit this structure for a thorough error analysis and find conditions that imply explicit error bounds. They reflect the modularity by decomposing into a part stemming from the Chebyshev interpolation, from the time-stepping and from the offline computation. Under smoothness conditions the asymptotic convergence behaviour is deduced.

We perform numerical experiments using the Black-Scholes model, Merton's jump diffusion model and the Constant Elasticity of Variance (CEV) model as a represenative of a local volatility model. For the computation of the generalized conditional moments we thus use different techniques, namely numerical integration based on Fourier transforms and Monte Carlo simulation. Numerical experiments confirm the fast convergence of option prices along with its delta and gamma. A comprehensive comparison with the LSM reveals the potential efficiency gain of the new approach, particularly when several options on the same underlying are priced.

The rest of the article is organized as follows. We introduce the problem setting and the new method in Section 2 and provide the error analysis in Section 3. Section 4 discusses general traits of the implementation and Section 5 presents the numerical experiments. Section 6 concludes the article, followed by an appendix with the proof of the main result.

\section{The Chebyshev method for Dynamic programming problems}
First, we present the Bellman-Wald equation as a specific form of dynamic programming. Second, we provide the necessary notation for the Chebyshev interpolation. Then we are in a position to introduce the new approach and its application to American option pricing.

\subsection{Optimal stopping and Dynamic Programming}
Let $X=(X_t)_{0\leq t\leq T}$ be a Markov process with state space $\mathbb{R}^{d}$ defined on the filtered probability space $(\Omega,\mathcal{F},(\mathcal{F}_{t})_{t\geq 0},\mathbb{P})$. Let $g:[0,T]\times\mathbb{R}^{d}\longrightarrow\mathbb{R}$ be a continuous function with $\EE\left[\sup_{0\leq t\leq T}\left|g(t, X_{t})\right|\right]<\infty.$ Then
\begin{align*}
V(t,x):=\sup_{t\leq\tau\leq T}\EE\left[g(\tau,X_{\tau})\vert X_{t}=x\right] \qquad \text{for all} \quad (t,x)\in [0,T]\times\mathbb{R}^{d}
\end{align*}
over all stopping times $\tau$, see $(2.2.2')$ in \cite{peskir2006optimal}. In discrete time the optimal stopping problems can be solved with dynamic programming.\\ 

Namely, with time stepping $t=t_0<\ldots<t_{n}=T$ the solution of the optimal stopping problem can be calculated via backward induction
\begin{align*}
V_{T}(x)&=g(T,x)\\
V_{t_u}(x)&=\max\left(g(t_u,x), \EE[V_{t_{u+1}}(X_{t_{u+1}})\vert X_{t_u}=x]\right).
\end{align*}
Note that $n$ refers to the number of time steps between $t$ and $T$. For notational convenience, we indicate the value function at each time step with subscript $t_u$ to directly refer to the time step $t_u$. For a detailed overview of optimal control problems in discrete time we refer to \cite{peskir2006optimal}.

\subsection{Chebyshev polynomial interpolation}

The univariate Chebyshev polynomial interpolation as described in detail in \cite{Trefethen2013} has a tensor based extension to the multivariate case, see e.g. \cite{SauterSchwab2010}. Usually, the Chebyshev interpolation is defined for a function on a $[-1,1]^D$ domain. For an arbitrary hyperrectangular $\mathcal{X}=[\underline{x}_{1},\overline{x}_1]\times\ldots \times[\underline{x}_D,\overline{x}_D]$, we introduce a linear transformation $\tau_{\mathcal{X}}:[-1,1]^D\rightarrow\mathcal{X}$ componentwise defined by
\begin{align}
\tau_{\mathcal{X}}(z_i)=\overline{x}_i+0.5(\underline{x}_i-\overline{x}_i)(1-z_i).\label{Transformation}
\end{align}
Let $\overline{N}:=(N_1,\ldots,N_D)$ with $N_i \in\nn_0$ for $i=1,\ldots,D$. We define the index set
\begin{align*}
\mathcal{J}:=\{j\in\mathbb{N}^D:1\le j_i\le N_D\ \text{for}\ i=1,\ldots,d\}.
\end{align*}
The Chebyshev polynomials are defined for $z\in[-1,1]^D$ and $j\in\mathcal{J}$ by
\begin{align*}
T_{j}(z) = \prod_{i=1}^D T_{j_i}(z_i),\quad T_{j_i}(z_i)=\cos(j_i\cdot\text{acos}(z_i)),
\end{align*}
and the $j$-th Chebyshev polynomial on $\mathcal{X}$ as $p_j(x)=T_j(\tau^{-1}_{\mathcal{X}}(x))1_{\mathcal{X}}(x)$. The Chebyshev points are given by 
\begin{align*}
z^k = (z_{k_1},\dots,z_{k_D}), \ z_{k_i}=\cos\left(\pi\frac{k_i}{N_i}\right)\text{ for }k_i=0,\ldots,N_i\text{ and }i=1,\ldots,D.
\end{align*}
and the transformed Chebyshev points by $x^k=\tau_{\mathcal{X}}(z^k).$ The Chebyshev interpolation of a function $f:\mathcal{X}\rightarrow\mathbb{R}$ with $\prod_{i=1}^D (N_{i}+1)$ summands can be written as a sum of Chebyshev polynomials
\begin{align}\label{Cheby_Interpolation}
I_{\overline{N}}(f)(x)=\sum_{j\in \mathcal{J}} c_{j} T_{j}(\tau_{\mathcal{X}}^{-1}(x))=\sum_{j\in \mathcal{J}} c_{j}p_{j}(x) \qquad \text{for}\quad x\in\mathcal{X}
\end{align}
with coefficients $c_j$ for $j\in \mathcal{J}$ 
\begin{align}\label{def:Chebycj}
c_j&=\Big( \prod_{i=1}^D \frac{2^{\1_{\{0<j_i<N_i\}}}}{N_i}\Big)\sum_{k\in\mathcal{J}}{}^{''}f(x^k)T_{j}(z^k)
\end{align}
where $\sum{}^{''}$ indicates the summand is multiplied by $1/2$ if $k_i=0$ or $k_i=N_i$.

\subsection{The Dynamic Chebyshev method}
In this section, we present the new approach to solve a dynamic programming problem via backward induction using Chebyshev polynomial interpolation.
\begin{defin}\label{defin_DPP}
We consider a Dynamic Programming Problem (DPP) with value function
\begin{align}
V_{T}(x)&=g(T,x)\label{DPP_1}\\
V_{t_u}(x)&=f\left(g(t_u,x), \EE[V_{t_{u+1}}(X_{t_{u+1}})\vert X_{t_u}=x]\right)\label{DPP_2},
\end{align}
where $t=t_0<\ldots<t_{n}=T$ and $f:\mathbb{R}\times\mathbb{R}\rightarrow\mathbb{R}$ is Lipschitz continuous with constant $L_f$. 
\end{defin}

At the initial time $T=t_{n}$, we apply Chebyshev interpolation to the function $g(T,x)$, i.e. for $x\in\mathcal{X}$,
\begin{align*}
V_{T}(x)=g(T,x)\approx \sum_{j\in \mathcal{J}} c_j(T)p_j(x)=:\widehat{V}_{T}(x)
\end{align*}
At the first time step $t_{n-1}$, the derivation of $\EE[g(t_n,X_{t_{n}})\vert X_{t_{n-1}}=x]$ is replaced by $\EE[\sum_{j} c_j(t_n)p_j(X_{t_{n}})\vert X_{t_{n-1}}=x]$ yielding
\begin{align*}
V_{t_{n-1}}(x)&=f\left(g(t_{n-1},x),\EE[V_{t_n}(X_{t_{n}})\vert X_{t_{n-1}}=x]\right)\\
&\approx f\Big(g(t_{n-1},x),\EE\Big[ \sum_{j\in \mathcal{J}} c_j(t_n)p_j(X_{t_{n}})\Big\vert X_{t_{n-1}}=x\Big]\Big)\\
&= f\Big(g(t_{n-1},x), \sum_{j\in \mathcal{J}} c_j(t_n)\EE\Big[p_j(X_{t_{n}})\Big\vert X_{t_{n-1}}=x\Big]\Big).
\end{align*}
At time step $t_{n-1}$ the value function $V_{t_{n-1}}$ needs only to be evaluated at the specific Chebyshev nodes. Hence, denoting with $x^k=(x_{k_1},\ldots,x_{k_D})$ the Chebyshev nodes, it suffices to evaluate
\begin{align}
V_{t_{n-1}}(x^k)\approx f\Big(g(t_{n-1},x^k), \sum_{j\in \mathcal{J}} c_j(t_n)\EE\Big[p_j(X_{t_{n}})\Big\vert X_{t_{n-1}}=x^k\Big]\Big)=:\widehat{V}_{t_{n-1}}(x^k).
\end{align}
A linear transformation of $(\widehat{V}_{t_{n-1}}(x^k))_{k\in\mathcal{J}}$ yields the Chebyshev coefficients according to \eqref{def:Chebycj} which determines the Chebyshev interpolation $\widehat{V}_{t_{n-1}}=\sum_j c_j(t_{n-1})p_j$. We apply this procedure iteratively as described in detail in Algorithm \ref{Algorithm_General}.

The stochastic part is gathered in the expectations of the Chebyshev polynomials conditioned on the Chebyshev nodes, i.e. $\Gamma_{j,k}(t_u)=\EE[p_j(X_{t_{u+1}})\vert X_{t_u}=x^k]$. Moreover, if an equidistant time stepping is applied the computation can be further simplified. If for the underlying stochastic process
\begin{align}\label{eq:generalized_moments}
\Gamma_{j,k}(t_u)=\EE[p_j(X_{t_{u+1}})\vert X_{t_u}=x^k]=\EE[p_j(X_{t_{1}})\vert X_{t_0}=x^k]=:\Gamma_{j,k} 
\end{align}
for $u=0,\ldots,n-1$, then the conditional expectations need to be computed only for one time step, see Algorithm \ref{Algorithm_Stationary}. One can pre-compute these conditional expectations and thus, the method allows for an offline/online decomposition.



\begin{algorithm}[H]
\caption{Dynamic Chebyshev algorithm}\label{Algorithm_General}
\begin{algorithmic}[1]
\Require $\ol{N}\in\mathbb{N}^{D}$, $\mathcal{X}=[\underline{x}_1,\overline{x}_1]\times\ldots\times[\underline{x}_D,\overline{x}_D]$, $0=t_0,\ldots,t_{n}=T$
\State Determine index set $\mathcal{J}$ and nodal points $x^k=(x_{k_1},\ldots,x_{k_D})$
\vspace{0.1cm}
\State \textbf{Pre-computation step:}
\State \quad\ For all $j,k\in \mathcal{J}$ and all $t_u,\ u=0,\ldots,n-1$
\State \quad\ Compute $\Gamma_{j,k}(t_u)=\EE[p_j(X_{t_{u+1}})\vert X_{t_{u}}=x^k]$
\vspace{0.1cm}
\State \textbf{Time $T$}
\State \quad\ $\widehat{V}_T(x^k)=g(T,x^k),\ k\in\mathcal{J}$, derive
\State \quad\ $c_j(T) = D_{\ol{N}}(j)\sum_{k\in \mathcal{J}}\!{}^{''}\widehat{V}_{T}(x^{k})T_{j}(z^{k})$
\State \quad\ Obtain Chebyshev interpolation $\widehat{V}_{T}(x)=\sum_{j\in \mathcal{J}} c_j(T)p_j(x)$ of $V_{T}(x)$
\vspace{0.1cm}
\State \textbf{Iterative time stepping} from $t_{u+1}\rightarrow t_{u},\ u=n-1,\ldots,1$
\State \quad\ Given Chebyshev interpolation of $\widehat{V}_{t_{u+1}}(x)=\sum_{j\in \mathcal{J}} c_j(t_{u+1})p_j(x)$
\State \quad\ Derivation of $\widehat{V}_{t_{u}}(x^k),\ k\in\mathcal{J}$ at the nodal points
\State \quad\ $\widehat{V}_{t_{u}}(x^k)=f(g(t_{u},x^k),\sum_{j\in \mathcal{J}} c_j(t_{u+1})\Gamma_{j,k}(t_u) )$
\State \quad\ Derive $c_j(t_{u}) = D_{\ol{N}}(j)\sum_{k\in \mathcal{J}}\!{}^{''}\widehat{V}_{t_{u}}(x^{k})T_{j}(z^{k})$
\State \quad\ Obtain Chebyshev interpolation $\widehat{V}_{t_{u}}(x)=\sum_{j\in \mathcal{J}} c_j(t_{u})p_j(x)$ of $V_{t_{t}}(x)$
\vspace{0.1cm}
\State \textbf{Deriving the solution at $t=0$}
\State \quad\ $\widehat{V}_{0}(x)=\sum_{j\in \mathcal{J}} c_j(0)p_j(x)$
\end{algorithmic}
\end{algorithm}
\vspace{-0.2cm}
\begin{algorithm}
\caption{Simplified Dynamic Chebyshev algrithm}\label{Algorithm_Stationary}
\begin{algorithmic}[1]
\Require Time steps $0=t_1,\ldots,t_{n}=T$ with $\Delta t:=t_{u}-t_{u-1}$
\State Replace in Algorithm \ref{Algorithm_General} Lines 2-4 with:
\vspace{0.1cm}
\State \textbf{Pre-computation step:}
\State \quad\ Compute $\Gamma_{j,k}=\EE[p_j(X_{\Delta  t})\vert X_{0}=x^k]$ for all $j, k\in\mathcal{J}$
\end{algorithmic}
\end{algorithm}
\FloatBarrier
\vspace{-0.5cm}
\section{Error Analysis}
In this section we analyse the error of Algorithm \ref{Algorithm_General}, i.e.
\begin{align}\label{eq:error_deterministic}
\varepsilon_{t_{u}}:=\max_{x\in\mathcal{X}}\vert V_{t_{u}}(x)-\widehat{V}_{t_{u}}(x)\vert.
\end{align}
Two different error sources occur at $t_u$, the classical interpolation error of the Chebyshev interpolation and a distortion error at the nodal points. The latter comes from the fact that the values $\widehat{V}_{t_u}(x^k)$ are approximations of $V_{t_u}(x^k)$. The behaviour of the interpolation error depends on the regularity of the value function. Here, we assume analyticity of the value function. The concept can be extended to further cases such as assuming differentiability or piecewise analyticity. The latter is discussed in preliminary form in \cite[Section 5.3]{Mahlstedt2017} and is further investigated in a follow-up paper. Hence, we need a convergence result for the Chebyshev interpolation which incorporates a distortion error at the nodal points.\\

First, we introduce the required notation. A \textit{Bernstein ellipse} $\mathcal{B}([-1,1],\varrho)$ with $\varrho>1$ is defined as the open region in the complex plane bounded by an ellipse with foci $\pm 1$ and semiminor and semimajor axis lengths summing to $\varrho$. We define a \textit{generalized Bernstein ellipse} $\mathcal{B}(\mathcal{X},\varrho)$ around the hyperrectangle $\mathcal{X}$ with parameter vector $\varrho\in(1,\infty)^D$ as
\begin{align*}
\mathcal{B}(\mathcal{X},\varrho):=\mathcal{B}([\ul{x}_1,\ol{x}_1],\varrho_1)\times\ldots
\times \mathcal{B}([\ul{x}_D,\ol{x}_D],\varrho_D )
\end{align*}
with $\mathcal{B}([\ul{x},\ol{x}],\varrho):=\tau_{[\ul{x},\ol{x}]}\circ \mathcal{B}([-1,1],\varrho)$,
where for $x\in\cc$ we have the transform $\tau_{[\ul{x},\ol{x}]}\big(\Re(x)\big)
:=\ol{x} + \frac{\ul{x}-\ol{x}}{2}\big(1-\Re(x)\big)$ and
$\tau_{[\ul{x},\ol{x}]}\big(\Im(x)\big):= \frac{\ol{x}-\ul{x}}{2}\Im(x)$ where the sets
$\mathcal{B}([-1,1],\varrho_i)$ are Bernstein ellipses for $i=1,\ldots,D$.

\begin{proposition}\label{Cheby_Err_Distortion_multi}
Let $\mathcal{X}\ni x\mapsto f(x)$ be a real-valued function with an analytic extension to some generalized Bernstein ellipse $\mathcal{B}(\mathcal{X},\varrho)$ for $\varrho\in(1,\infty)^{D}$ with $\sup_{x\in \mathcal{B}(\mathcal{X},\varrho)}|f(x)|\leq b$. Assume distorted values $f^{\varepsilon}(x^{k})=f(x^{k})+\varepsilon(x^{k})$ with $\vert\varepsilon(x^{k})\vert\le \ol{\varepsilon}$ at all nodes $x^{k}$. Then
\begin{align*}
\max_{x\in\mathcal{X}}\big|f(x) - I_{\ol{N}}(f^{\varepsilon})(x)\big|\leq \varepsilon_{int}(\varrho,N,D,B) + \ol{\varepsilon}\Lambda_{\ol{N}}.
\end{align*}
with
\begin{align}\label{eq:Cheby_error_bound_alpha}
\varepsilon_{int}(\varrho,N,D,B):= 2^{\frac{D}{2}+1}\cdot B \cdot\bigg(\sum_{i=1}^D\varrho_i^{-2N_i}\prod_{j=1}^D\frac{1}{1-\varrho_j^{-2}}\bigg)^{\frac{1}{2}}
\end{align}
and Lebesgue constant $\Lambda_{\ol{N}}\leq \prod_{i=1}^{D}\big(\frac{2}{\pi}\log(N_i+1)+1\big)$.
\end{proposition}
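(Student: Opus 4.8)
The plan is to decompose the error into a pure interpolation error plus a propagated-distortion term. Since the Chebyshev interpolation operator $I_{\ol{N}}$ depends linearly on the nodal values, the triangle inequality gives, for every $x\in\mathcal{X}$,
\begin{align*}
\big|f(x) - I_{\ol{N}}(f^{\varepsilon})(x)\big| \;\le\; \big|f(x) - I_{\ol{N}}(f)(x)\big| \;+\; \big|I_{\ol{N}}(f - f^{\varepsilon})(x)\big|,
\end{align*}
so it suffices to bound the two summands separately and uniformly in $x$.

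For the first summand I would invoke the interpolation error estimate for functions with an analytic extension to a generalized Bernstein ellipse. In one dimension this is the classical Bernstein-type bound (cf. \cite{Trefethen2013}): if $f$ is analytic on $\mathcal{B}([-1,1],\varrho)$ and bounded there by $M$, then $\|f - I_N f\|_\infty \le \tfrac{4M}{\varrho-1}\varrho^{-N}$. Tensorizing over the $D$ coordinate directions — writing $I_{\ol{N}} = I_{N_1}\otimes\cdots\otimes I_{N_D}$, telescoping $f - I_{\ol{N}}f$ into $D$ terms each of which is a one-dimensional interpolation error in one variable applied after interpolation in the remaining ones, and aggregating the $D$ contributions via Cauchy--Schwarz (which produces the $2^{D/2+1}$ prefactor, the $\varrho_i^{-2N_i}$ decay, and the products $\prod_j(1-\varrho_j^{-2})^{-1}$ from the geometric series of the Chebyshev coefficients) — yields exactly $\varepsilon_{int}(\varrho,N,D,B)$ as defined in \eqref{eq:Cheby_error_bound_alpha}. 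This is precisely the multivariate interpolation bound established in \cite{GassGlauMahlstedtMair2018}, which I would cite rather than reprove; the only bookkeeping is to identify the sup bound $b$ on $\mathcal{B}(\mathcal{X},\varrho)$ with the constant $B$ appearing there.

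For the second summand, observe that $(f - f^{\varepsilon})(x^{k}) = -\varepsilon(x^{k})$ at every Chebyshev node, so writing $\ell_k$ for the tensor-product Lagrange cardinal functions associated with the nodes $x^k$,
\begin{align*}
\big|I_{\ol{N}}(f - f^{\varepsilon})(x)\big| \;=\; \Big| \sum_{k} \varepsilon(x^{k})\,\ell_k(x) \Big| \;\le\; \ol{\varepsilon}\,\sup_{x\in\mathcal{X}}\sum_{k}|\ell_k(x)| \;=\; \ol{\varepsilon}\,\Lambda_{\ol{N}},
\end{align*}
where the sum runs over all Chebyshev nodes and $\Lambda_{\ol{N}}$ is by definition the Lebesgue constant of the node set. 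The tensor-product structure gives $\Lambda_{\ol{N}}\le\prod_{i=1}^{D}\Lambda_{N_i}$ with $\Lambda_{N_i}$ the one-dimensional Lebesgue constant for the $N_i+1$ Chebyshev--Gauss--Lobatto points, and the classical estimate $\Lambda_{N_i}\le\tfrac{2}{\pi}\log(N_i+1)+1$ (see \cite{Trefethen2013}) completes the bound on $\Lambda_{\ol{N}}$. Adding the two summands gives the claim.

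The genuinely delicate step is the first one: obtaining the precise constant and the $\ell^2$-type aggregation over the $D$ directions in $\varepsilon_{int}$, rather than a cruder $\sum_i\varrho_i^{-N_i}$ bound. Since that estimate is already available in \cite{GassGlauMahlstedtMair2018}, the proof here amounts to assembling two known ingredients together with the elementary — but for this paper essential — observation that distortion at the nodes propagates through a linear interpolation operator with amplification at most the Lebesgue constant.
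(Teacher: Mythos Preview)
Your proposal is correct and follows essentially the same route as the paper: decompose via linearity of $I_{\ol{N}}$ into a pure interpolation error plus the interpolated distortion, bound the latter by $\ol{\varepsilon}\Lambda_{\ol{N}}$ using the Lagrange form and the tensor-product factorisation $\Lambda_{\ol{N}}=\prod_i\Lambda_{N_i}$ together with the univariate estimate from \cite{Trefethen2013}, and cite an existing multivariate Bernstein-type bound for the former. The only cosmetic difference is the reference for $\varepsilon_{int}$: the paper invokes \cite{SauterSchwab2010} rather than \cite{GassGlauMahlstedtMair2018}.
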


\begin{proof}
Using the linearity of the interpolation operator we obtain for the Chebyshev interpolation of $f^{\varepsilon}$ with $f^{\varepsilon}(x^{k})=f(x^{k})+\varepsilon(x^{k})$ that
\begin{align*}
I_{\ol{N}}(f^{\varepsilon})(x)=I_{\ol{N}}(f)(x)+I_{\ol{N}}(\varepsilon)(x).
\end{align*}
The tensor-based multivariate Chebyshev interpolation $I_{\ol{N}}(\varepsilon)$ can be written in Lagrange form
\begin{align*}
I_{\ol{N}}(\varepsilon)(x)=\sum_{j\in\mathcal{J}}\varepsilon(x^j)\lambda^{j}(x) \qquad \text{with} \quad \lambda^{j}(x)=\prod_{i=1}^{D}\ell_{j_i}(\tau^{-1}_{[\ul{x}_i,\ol{x}_i]}(x_i))
\end{align*}
where $\ell_{j_i}(z)=\prod_{k\neq j_i}\frac{z-z_{k}}{z_{j_i}-z_{k}}$ is the $j_{i}-$th Lagrange polynomial. This yields 
\begin{align*}
\max_{x\in\mathcal{X}}\left\vert I_{N}(\varepsilon)(x) \right\vert
= \max_{x\in\mathcal{X}}\Big\vert \sum_{j\in\mathcal{J}}\varepsilon(x^j)\lambda^{j}(x) \Big\vert
\leq \ol{\varepsilon}\max_{x\in\mathcal{X}}\sum_{j\in\mathcal{J}}|\lambda^{j}(x)|
=:\ol{\varepsilon} \Lambda_{\ol{N}}.
\end{align*} 
The term $\Lambda_{\ol{N}}$ is the Lebesgue constant of the (multivariate) Chebyshev nodes which is given by
\begin{align*}
\Lambda_{\ol{N}}=\max_{x\in\mathcal{X}}\sum_{j\in\mathcal{J}}\big|\lambda^{j}(x)\big|
=\max_{x\in\mathcal{X}}\sum_{j\in\mathcal{J}}\prod_{i=1}^D \big|\ell_{j_i}(x_i)\big|
=\prod_{i=1}^D \max_{x_i\in[\ul{x}_i,\ol{x}_i]} \sum_{j_i=0}^{N_i}
\Big|\ell_{j_i}\big(\tau^{-1}_{[\ul{x}_i,\ol{x}_i]}(x_i)\big)\Big|.
\end{align*}
Since $\max_{x_i\in[\ul{x}_i,\ol{x}_i]}\sum_{j_i=0}^{N_i} |\ell_{j_i}(\tau^{-1}_{[\ul{x}_i,\ol{x}_i]}
(x_i))|= \max_{z\in[-1,1]}\sum_{j_i=0}^{N_i} |\ell_{j_i}(z)| =\Lambda_{N_i}$, which is the Lebesgue constant of the univariate Chebyshev interpolation, we have $\Lambda_{\ol{N}}=\prod_{i=1}^{D}\Lambda_{N_i}$. From \cite[Theorem 15.2]{Trefethen2013} we obtain for the univariate Chebyshev interpolation $\Lambda_{N}\leq \frac{2}{\pi}\log(N+1)+1$ and hence
\begin{align}\label{eq:multi_Lesbegue_constant}
\Lambda_{\ol{N}}\leq \prod_{i=1}^{D}\Big(\frac{2}{\pi}\log(N_i+1)+1\Big).
\end{align}

For the distorted Chebyshev interpolation holds
\begin{align*}
\big|f(x) - I_{\ol{N}}(f^{\varepsilon})(x)\big|\leq |f(x) - I_{\ol{N}}(f)(x)\big|+|I_{\ol{N}}(\varepsilon)(x)\big|.
\end{align*}
Therefore, the proposition follows directly from \eqref{eq:multi_Lesbegue_constant} and \cite{SauterSchwab2010}.
\end{proof}

We use this result to investigate the error of the Dynamic Chebyshev method. First, we introduce the following assumption.
\begin{assumptions}\label{assumption_analytic_value}
We assume $\mathcal{X}\ni x\mapsto V_{t_u}(x)$ is a real valued function that has an analytic extension to a generalized Bernstein ellipse $\mathcal{B}(\mathcal{X},\varrho_{t_u})$ with $\varrho_{t_u}\in (1,\infty)^D$ and $\sup_{x\in \mathcal{B}(\mathcal{X},\varrho_{t_u})}|V_{t_u}(x)|\le B_{t_u}$ for $u=1,\ldots,n$.
\end{assumptions}
Proposition \ref{Cond_analy_DPP_nosplit} provides conditions on the process $X$ and the functions $f$ and $g$ that guaranty Assumptions \ref{assumption_analytic_value}. Under this assumptions, we can apply Proposition \ref{Cheby_Err_Distortion_multi} to obtain an error bound for the Dynamic Chebyshev method at each time step. This error bound has a recursive structure, since the values of $V_{t_{u}}$ depend on the conditional expectation of $V_{t_{u+1}}$. The interpolation error of the final time step is of form \eqref{eq:Cheby_error_bound_alpha}. At any other time step $t_{u}$ an additional distortion error by approximating the function values at the nodal points by
\begin{align*}
V_{t_{u}}(x^k)\approx f\bigg(g(t_{u},x^k), \sum_{j\in \mathcal{J}} c_j(t_{u+1})\EE[p_j(X_{t_{u+1}})\vert X_{t_{u}}=x^k]\bigg)=\widehat{V}_{t_{u}}(x^k)
\end{align*}
comes into play. Proposition \ref{Cheby_Err_Distortion_multi} yields
\begin{align*}
\varepsilon_{t_{u}}:=\max_{x\in\mathcal{X}}\vert V_{t_{u}}(x)-\widehat{V}_{t_{u}}(x)\vert \le \varepsilon_{int}(\varrho_{t_{u}},N,D,B_{t_{u}}) +\Lambda_{\ol{N}}F_{t_{u}},
\end{align*}
where $F_{t_{u}}:=\max_{j\in \mathcal{J}}\vert V_{t_{u}}(x_{j})- \widehat{V}_{t_{u}}(x_{j})\vert$. The term $F_{t_{u}}$ depends on the function $f$ and the interpolation error at the previous time step $t_{u+1}$.

Moreover, two additional error sources can influence the error bound. If there is no closed-form solution for the generalized moments $\EE[p_j(X_{t_{u+1}})\vert X_{t_{u}}=x^k]$ a numerical technique, e.g. numerical quadrature or Monte Carlo methods, introduces an additional error. The former is typically deterministic and bounded whereas the latter is stochastic. In order to incorporate this error in the following error analysis we introduce some additional notation. The conditional expectation can be seen as a linear operator which operates on the vector space of all continuous functions $\mathcal{C}(\mathbb{R}^{D})$ with finite $L^{\infty}$-norm
\begin{align*}
\Gamma_{t_{u}}^{k}:\mathcal{C}(\mathbb{R}^{D})\rightarrow\mathbb{R} \quad \text{with} \quad \Gamma_{t_{u}}^{k}(f):=\EE[f(X_{t_{u+1}})\vert X_{t_{u}}=x^k].
\end{align*}
Define the subspace of all $D$ variate polynomials $\mathcal{P}_{\ol{N}}(\mathcal{X}):=\text{span}\{p_{j},\ j\in\mathcal{J}\}$ equipped with the $L^{\infty}$-norm. 
We assume the operator $\Gamma_{t_{u}}^{k}$ is approximated by a linear operator $\widehat{\Gamma}_{t_{u}}^{k}:\mathcal{P}_{\ol{N}}(\mathcal{X})\rightarrow\mathbb{R}$ on $\mathcal{P}_{\ol{N}}(\mathcal{X})$ which fullfills one of the two following conditions. For all $u=0,\ldots,n$ the approximation is either deterministic and the error is bounded by a constant $\ol{\delta}$,
\begin{align}\label{cond_exp_delta_bound}
\vert\vert\Gamma_{t_{u}}^{k}-\widehat{\Gamma}_{t_{u}}^{k}\vert\vert_{op}:=\sup_{\substack{p\in\mathcal{P}_{\ol{N}}\\ \vert\vert p\vert\vert=1}}\left\vert \Gamma_{t_{u}}^{k}(p)-\widehat{\Gamma}_{t_{u}}^{k}(p)\right\vert\leq \ol{\delta} \quad \forall k\in\mathcal{J}
\tag{GM}
\end{align}
or the approximation is stochastic and uses $M$ samples of the underlying process and the polynomials $p$ may have stochastic coefficients. In this case we assume the error bound
\begin{align}\label{cond_exp_MC_bound}
\vert\vert\Gamma_{t_{u}}^{k}-\widehat{\Gamma}_{t_{u}}^{k}\vert\vert_{op}:=\sup_{\substack{p\in\mathcal{P}_{\ol{N}}\\ \vert\vert p\vert\vert_{\infty}^{\star}=1}}\EE\left[\left\vert \Gamma_{t_{u}}^{k}(p)-\widehat{\Gamma}_{t_{u}}^{k}(p)\right\vert\right]\leq \delta^{\star}(M) \quad \forall k\in\mathcal{J}
\tag{$\text{GM}^{*}$}
\end{align}
with norm $\vert\vert p\vert\vert_{\infty}^{\star}=\max_{x\in\mathcal{X}}\EE[\vert p(x)\vert]$. In order to incorporate stochasticity of $\widehat{V}_{t_u}(x)$, we replace \eqref{eq:error_deterministic} by
\begin{align}\label{eq:error_stochastic}
\varepsilon_{t_{u+1}}:=\max_{x\in\mathcal{X}}\EE\left[\left\vert V_{t_{u}}(x)-\widehat{V}_{t_{u}}(x)\right\vert\right].
\end{align}
Note that in the deterministic case \eqref{eq:error_deterministic} and \eqref{eq:error_stochastic} coincide. Additionally, a truncation error is introduced by restricting to a compact interpolation domain $\mathcal{X}$. We assume that the conditional expectation of the value function outside this set is bounded by a constant
\begin{align}\label{trunc_err_bound}
\EE[V_{t_{u+1}}(X_{t_{u+1}})\1_{\mathbb{R}^{D}\setminus\mathcal{X}}\vert X_{t_{u}}=x^{k}]\leq \varepsilon_{tr}.
\tag{TR}
\end{align}

The following theorem provides an error bound for the Dynamic Chebyshev method. 

\begin{theorem}\label{Error_DPP_Algo1_Lip}
Let the DPP be given as in Definition \ref{defin_DPP}. Assume the regularity Assumptions \ref{assumption_analytic_value} hold and the boundedness of the truncation error \eqref{trunc_err_bound}. Then we have 
\begin{align}\label{Error_t_k_Lip_delta}
\varepsilon_{t_{u}}\leq \sum_{j=u}^{n}C^{j-u}\varepsilon_{int}^{j}\ +\ \Lambda_{\ol{N}}L_{f}\sum_{j=u+1}^{n}C^{j-(u+1)}(\varepsilon_{tr}+\varepsilon_{gm}\ol{V}_{j})
\end{align}
with with $\varepsilon_{gm}=\ol{\delta}$ if assumption \eqref{cond_exp_delta_bound} holds and $\varepsilon_{gm}=\delta^{\star}(M)$ if assumption \eqref{cond_exp_MC_bound} holds 
and $C=\Lambda_{\ol{N}}L_{f}(1+\varepsilon_{gm})$, $\ol{V}_{j}=\max_{x\in\mathcal{X}}|V_{t_{j}}(x)|$ and $\varepsilon_{int}^{j}=\varepsilon_{int}
(\varrho_{t_{j}},N,D,B_{t_{j}})$.
\end{theorem}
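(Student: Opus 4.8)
The plan is to prove the bound by backward induction on $u$, reducing each time step to Proposition~\ref{Cheby_Err_Distortion_multi}. At the terminal time $t_n=T$ the nodal values are computed exactly, $\widehat{V}_T(x^k)=g(T,x^k)=V_T(x^k)$, so Proposition~\ref{Cheby_Err_Distortion_multi} applied with vanishing distortion $\ol{\varepsilon}=0$ (using that $V_{t_n}=g(T,\cdot)$ is analytic on $\mathcal{B}(\mathcal{X},\varrho_{t_n})$ by Assumption~\ref{assumption_analytic_value}) gives $\varepsilon_{t_n}\le\varepsilon_{int}^n$, which is the base case. For the inductive step I would apply Proposition~\ref{Cheby_Err_Distortion_multi} to $V_{t_u}$ with distorted nodal values $\widehat{V}_{t_u}(x^k)$, obtaining $\varepsilon_{t_u}\le\varepsilon_{int}^u+\Lambda_{\ol{N}}F_{t_u}$ where $F_{t_u}=\max_{k\in\mathcal{J}}\EE\big[|V_{t_u}(x^k)-\widehat{V}_{t_u}(x^k)|\big]$; everything then reduces to estimating $F_{t_u}$ in terms of $\varepsilon_{t_{u+1}}$ and the auxiliary error constants.

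The core estimate proceeds as follows. Since $g(t_u,\cdot)$ enters $V_{t_u}(x^k)$ and $\widehat{V}_{t_u}(x^k)$ identically and $f$ is $L_f$-Lipschitz, one has pointwise $|V_{t_u}(x^k)-\widehat{V}_{t_u}(x^k)|\le L_f\,\big|\Gamma_{t_u}^k(V_{t_{u+1}})-\widehat{\Gamma}_{t_u}^k(\widehat{V}_{t_{u+1}})\big|$, and I would split the argument on the right as
\[
\Gamma_{t_u}^k(V_{t_{u+1}})-\widehat{\Gamma}_{t_u}^k(\widehat{V}_{t_{u+1}})=\EE\big[(V_{t_{u+1}}-\widehat{V}_{t_{u+1}})(X_{t_{u+1}})\,\big|\,X_{t_u}=x^k\big]+\big(\Gamma_{t_u}^k-\widehat{\Gamma}_{t_u}^k\big)(\widehat{V}_{t_{u+1}}).
\]
For the first term I split $\mathbb{R}^D=\mathcal{X}\cup\mathcal{X}^c$: on $\mathcal{X}$ the integrand is controlled by $\varepsilon_{t_{u+1}}$ (by definition \eqref{eq:error_stochastic} together with a conditional probability being $\le 1$), while on $\mathcal{X}^c$ one has $\widehat{V}_{t_{u+1}}\equiv 0$ so the contribution is $\le\varepsilon_{tr}$ by \eqref{trunc_err_bound}. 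For the second term, $\widehat{V}_{t_{u+1}}\in\mathcal{P}_{\ol{N}}(\mathcal{X})$, so \eqref{cond_exp_delta_bound} (resp.\ \eqref{cond_exp_MC_bound}) bounds it by $\varepsilon_{gm}\,\|\widehat{V}_{t_{u+1}}\|$, and the triangle inequality gives $\|\widehat{V}_{t_{u+1}}\|\le\ol{V}_{u+1}+\varepsilon_{t_{u+1}}$. Collecting, $F_{t_u}\le L_f\big((1+\varepsilon_{gm})\varepsilon_{t_{u+1}}+\varepsilon_{tr}+\varepsilon_{gm}\ol{V}_{u+1}\big)$, hence
\[
\varepsilon_{t_u}\le\varepsilon_{int}^u+C\,\varepsilon_{t_{u+1}}+\Lambda_{\ol{N}}L_f\big(\varepsilon_{tr}+\varepsilon_{gm}\ol{V}_{u+1}\big),\qquad C=\Lambda_{\ol{N}}L_f(1+\varepsilon_{gm}).
\]

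Finally I would unroll this scalar linear recursion from $u=n$ downward using $\varepsilon_{t_n}\le\varepsilon_{int}^n$; a straightforward induction yields $\varepsilon_{t_u}\le\sum_{j=u}^n C^{j-u}\varepsilon_{int}^j+\Lambda_{\ol{N}}L_f\sum_{j=u}^{n-1}C^{j-u}(\varepsilon_{tr}+\varepsilon_{gm}\ol{V}_{j+1})$, and re-indexing the second sum by $j\mapsto j-1$ gives the stated form. The main obstacle is the stochastic case \eqref{cond_exp_MC_bound}: there $\widehat{V}_{t_{u+1}}$ has random coefficients while $\widehat{\Gamma}_{t_u}^k$ is itself a Monte-Carlo functional, so pushing the expectation through both terms of the split (in particular the Tonelli step that turns $\EE\big[|V_{t_{u+1}}(X_{t_{u+1}})-\widehat{V}_{t_{u+1}}(X_{t_{u+1}})|\,\mathbf{1}_{\mathcal{X}}\,\big|\,X_{t_u}=x^k\big]\le\varepsilon_{t_{u+1}}$, and the operator bound $\EE|(\Gamma_{t_u}^k-\widehat{\Gamma}_{t_u}^k)(\widehat{V}_{t_{u+1}})|\le\delta^{\star}(M)\|\widehat{V}_{t_{u+1}}\|_\infty^{\star}$) requires the samples used at step $u$ to be independent of those already entering $\widehat{V}_{t_{u+1}}$ — which is precisely why the norm $\|\cdot\|_\infty^{\star}$ and the expectation-based error \eqref{eq:error_stochastic} are set up the way they are. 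A minor technical point is that $\widehat{V}_{t_{u+1}}=\sum_{j\in\mathcal{J}}c_j(t_{u+1})p_j$ vanishes off $\mathcal{X}$ and is hence discontinuous on $\partial\mathcal{X}$, so the expression $\Gamma_{t_u}^k(\widehat{V}_{t_{u+1}})$ should be read via the natural extension of the conditional-expectation operator to bounded measurable functions (or under the tacit assumption that $X_{t_{u+1}}$ does not charge $\partial\mathcal{X}$).
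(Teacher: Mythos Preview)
Your proposal is correct and follows essentially the same approach as the paper's proof: apply Proposition~\ref{Cheby_Err_Distortion_multi} at each step, use the Lipschitz property of $f$ to reduce the nodal distortion to $\big|\Gamma_{t_u}^k(V_{t_{u+1}})-\widehat{\Gamma}_{t_u}^k(\widehat{V}_{t_{u+1}})\big|$, split this into the same three pieces (interpolation on $\mathcal{X}$, truncation on $\mathcal{X}^c$, generalized-moments error), obtain the recursion $\varepsilon_{t_u}\le\varepsilon_{int}^u+C\varepsilon_{t_{u+1}}+\Lambda_{\ol{N}}L_f(\varepsilon_{tr}+\varepsilon_{gm}\ol{V}_{u+1})$, and unroll it by induction. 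The only cosmetic difference is that the paper writes the three-term split directly, whereas you first split into two terms and then further decompose the first over $\mathcal{X}\cup\mathcal{X}^c$; the resulting bounds are identical. Your remarks on the independence of Monte-Carlo samples across time steps and on the boundary behaviour of $\widehat{V}_{t_{u+1}}$ are valid technical caveats that the paper's proof also leaves implicit.
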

\begin{proof}
The proof of the theorem can be found in the appendix.
\end{proof}

The following corollary provides a simplified version of the error bound \eqref{Error_t_k_Lip_delta} presenting its decomposition into three different error sources (interpolation error $\varepsilon_{int}$, truncation error $\varepsilon_{tr}$ and the error from the numerical computation of the generalized moments $\varepsilon_{gm}$).
\begin{corollary}\label{Error_DPP_Algo1_Lip_no}
Let the setting be as in Theorem \ref{Error_DPP_Algo1_Lip}. Then the error is bounded by
\begin{align}\label{eq:error_bound_simplified}
\varepsilon_{t_{u}}&\leq \left(\varepsilon_{int}(\ul{\varrho},N,D,\ol{B})+\varepsilon_{tr}+\varepsilon_{gm}\ol{V}\right)\tilde{C}^{n+1-u}
\end{align}
with $\tilde{C}=\max\{2,C\}$, $\ul{\varrho}=\min_{1\leq u\leq n}\varrho_{t_u}$, $\ol{B}=\max_{1\leq u\leq n}B_{t_{u}}$, $\ol{V}=\max_{u\leq j\leq n}\ol{V}_j$.

Moreover, if $\varepsilon_{tr}=0$, $L_{f}=1$ and $N=N_{i}$, $i=1,\ldots,D$ the error bound can be simplified further. Under \eqref{cond_exp_MC_bound} $\delta^{\star}(M)\leq c/\sqrt{M}$, $c>0$ yields
\begin{align*}
\varepsilon_{t_{u}}\leq \tilde{c}_{1}\varrho^{-N}\log(N)^{D\,n}\ +\ \tilde{c}_{2}\log(N)^{D\,n}M^{-0.5}.
\end{align*}
for some constants $\tilde{c}_{1},\tilde{c}_{2}>0$. Under \eqref{cond_exp_delta_bound} the term $M^{-0.5}$ is replaced by $\ol{\delta}$.
\end{corollary}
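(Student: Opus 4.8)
The plan is to derive both parts of the corollary directly from the explicit estimate \eqref{Error_t_k_Lip_delta} of Theorem \ref{Error_DPP_Algo1_Lip} by elementary monotonicity arguments and by summing geometric series; no new probabilistic input is needed.

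\textbf{Step 1 (uniformising the time-dependent quantities).} First I would note that $\varepsilon_{int}(\varrho,N,D,B)$ as defined in \eqref{eq:Cheby_error_bound_alpha} is nondecreasing in $B$ and nonincreasing in each component of $\varrho$, because $\varrho\mapsto\varrho^{-2N}$ and $\varrho\mapsto(1-\varrho^{-2})^{-1}$ are decreasing on $(1,\infty)$. Hence $\varepsilon_{int}^{j}\le\varepsilon_{int}(\ul\varrho,N,D,\ol B)$ for all $j$, with $\ul\varrho$ and $\ol B$ as in the statement; likewise $\ol V_j\le\ol V$ for $u\le j\le n$, and $\Lambda_{\ol N}L_f=C/(1+\varepsilon_{gm})\le C\le\tilde C$ since $\varepsilon_{gm}\ge0$. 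Plugging these three bounds into \eqref{Error_t_k_Lip_delta} reduces the task to bounding $\sum_{m=0}^{n-u}C^{m}$ and $\tilde C\sum_{m=0}^{n-u-1}C^{m}$ by $\tilde C^{\,n+1-u}$.

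\textbf{Step 2 (the geometric sums).} Here I would distinguish the two regimes behind $\tilde C=\max\{2,C\}$. If $C\le2$ then $\tilde C=2$ and $\sum_{m=0}^{k}C^{m}\le\sum_{m=0}^{k}2^{m}\le2^{k+1}=\tilde C^{\,k+1}$. If $C>2$ then $\tilde C=C$ and, since $C-1>1$, $\sum_{m=0}^{k}C^{m}=\frac{C^{k+1}-1}{C-1}\le C^{k+1}=\tilde C^{\,k+1}$. Applying this with $k=n-u$ to the first sum gives $\varepsilon_{int}(\ul\varrho,N,D,\ol B)\tilde C^{\,n+1-u}$, and with $k=n-u-1$ to the second (where the leading factor $\tilde C$ raises the exponent by one) gives $(\varepsilon_{tr}+\varepsilon_{gm}\ol V)\tilde C^{\,n+1-u}$; adding the two contributions yields \eqref{eq:error_bound_simplified}.

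\textbf{Step 3 (the asymptotic rate).} Now I set $\varepsilon_{tr}=0$, $L_f=1$, $N_i=N$, and write $\varrho:=\min_i\ul\varrho_i$. Then $\sum_{i=1}^{D}\ul\varrho_i^{-2N}\le D\varrho^{-2N}$ while $\prod_{j}(1-\ul\varrho_j^{-2})^{-1}$ is an $N$-independent constant, so $\varepsilon_{int}(\ul\varrho,N,D,\ol B)\le\tilde c_0\,\varrho^{-N}$ for a constant $\tilde c_0$ depending only on $D$, $\ol B$ and $\varrho$. Next I use $\Lambda_{\ol N}\le(\tfrac2\pi\log(N+1)+1)^{D}$ from Proposition \ref{Cheby_Err_Distortion_multi}, which is $\le(c'\log N)^{D}$ for all $N$ past a fixed threshold; since $C=\Lambda_{\ol N}(1+\varepsilon_{gm})$, for $N$ and $M$ large enough $\tilde C=C$ and $\tilde C^{\,n+1-u}\le(c'\log N)^{D(n+1-u)}(1+\varepsilon_{gm})^{n+1-u}$, and as $(1+\varepsilon_{gm})^{n+1-u}\le(1+c)^{n}$ is uniformly bounded and $n+1-u\le n$ on the relevant range of $u$, this gives $\tilde C^{\,n+1-u}\le\kappa\log(N)^{Dn}$ with $\kappa=\kappa(D,n)$. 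Substituting into \eqref{eq:error_bound_simplified} with $\varepsilon_{gm}=\delta^{\star}(M)\le c/\sqrt M$ yields $\varepsilon_{t_u}\le(\tilde c_0\varrho^{-N}+c\ol V/\sqrt M)\,\kappa\log(N)^{Dn}$, i.e.\ the claim with $\tilde c_1=\tilde c_0\kappa$ and $\tilde c_2=c\ol V\kappa$; under \eqref{cond_exp_delta_bound} the identical computation with $\varepsilon_{gm}=\ol\delta$ replaces $M^{-0.5}$ by $\ol\delta$.

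\textbf{Main obstacle.} There is no genuine difficulty; the only points requiring care are the bookkeeping around the kink of $\tilde C=\max\{2,C\}$, so that both geometric sums collapse into the single power $\tilde C^{\,n+1-u}$, and, in Step 3, ensuring that the $N$-independent multiplicative factors arising when passing from $\tfrac2\pi\log(N+1)+1$ to $\log N$ and when bounding $(1+\varepsilon_{gm})^{n}$ are absorbed into $\tilde c_1,\tilde c_2$ rather than into the exponent of $\log N$.
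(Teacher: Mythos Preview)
Your proposal is correct and follows essentially the same route as the paper: uniformise the time-dependent quantities by monotonicity, collapse both sums in \eqref{Error_t_k_Lip_delta} via the geometric series with the case split $C\le2$ versus $C>2$, and then insert the elementary bounds $\varepsilon_{int}\le c\varrho^{-N}$ and $\Lambda_{\ol N}\le c'\log(N)^{D}$ to extract the asymptotic rate. The only cosmetic difference is that you spell out the monotonicity of $\varepsilon_{int}(\varrho,N,D,B)$ in $\varrho$ and $B$ explicitly, whereas the paper just writes $\max_j\varepsilon_{int}^{j}\le\varepsilon_{int}(\ul\varrho,N,D,\ol B)$ without comment.
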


\begin{proof}
Assuming $C>2$ and using the geometric series, the first term in the error bound \eqref{Error_t_k_Lip_delta} can be rewritten as
\begin{align*}
\sum_{j=u}^{n}C^{j-u}\varepsilon_{int}^j\leq \ol{\varepsilon}_{int}\sum_{j=u}^{n}C^{j-u} =\ol{\varepsilon}_{int}\sum_{k=0}^{n-u}C^{k}=\ol{\varepsilon}_{int}\left(\frac{1-C^{n+1-u}}{1-C}\right)\leq\ol{\varepsilon}_{int}\,C^{n+1-u},
\end{align*}
where $\ol{\varepsilon}_{int}=\max_j \varepsilon_{int}^j=\max_j \varepsilon_{int}(\varrho_{t_{j}},N,D,B_{t_{j}})\leq\varepsilon_{int}(\ul{\varrho},N,D,\ol{B})$ for $\ul{\varrho}=\min_{1\leq u\leq n}\varrho_{u}$ and $\ol{B}=\max_{1\leq u\leq n}B_{t_{u}}$. For $C\leq 2$ the sum is bounded by $\ol{\varepsilon}_{int}\,2^{n+1-u}$. Similar, we obtain for the second term in the error bound \eqref{Error_t_k_Lip_delta} with $\beta=(\varepsilon_{tr}+\varepsilon_{gm}\ol{V}_{j})$
\begin{align*}
\Lambda_{\ol{N}}L_{f}\sum_{j=u+1}^{n}C^{j-(u+1)}\beta_{j}\leq \Lambda_{\ol{N}}L_{f}\,\ol{\beta}\sum_{k=0}^{n-(u+1)}C^{k}
\leq \Lambda_{\ol{N}}L_{f}\,\ol{\beta} C^{n-u}\leq \ol{\beta} C^{n+1-u}
\end{align*}
where $\ol{\beta}=\max_j\beta_j$. Moreover, we used $\Lambda_{\ol{N}}L_{f}\leq \Lambda_{\ol{N}}L_{f}(1+\varepsilon_{gm})=C$ in the last step. Thus, we obtain the following error bound \eqref{Error_t_k_Lip_delta}
\begin{align*}
\varepsilon_{t_{u}}&\leq (\ol{\varepsilon}_{int}\ +\ \ol{\beta})\,\tilde{C}^{n+1-u}= \left(\varepsilon_{int}(\ul{\varrho},N,D,\ol{B})+\varepsilon_{tr}+\varepsilon_{gm}\ol{V}\right)\tilde{C}^{n+1-u},
\end{align*}
where $\tilde{C}=\max\{2,C\}$ and $\ol{V}=\max_j\ol{V}_j$, which shows \eqref{eq:error_bound_simplified}.\\

Furthermore, using the definition of the error bound \eqref{eq:Cheby_error_bound_alpha} and $N=N_{i}$, $i=1,\ldots,D$ we conclude that $\varepsilon_{int}(\ul{\varrho},N,D,\ol{B})\leq c_{1}\varrho^{-N}$ for a constant $c_{1}>0$. For the Lebesgue constant of the Chebyshev interpolation exists a constant $c_{2}>0$ such that
\begin{align*}
\Lambda_{\ol{N}}\leq \prod_{i=1}^{D}\big(\frac{2}{\pi}\log(N+1)+1\big)
\leq \prod_{i=1}^{D}\big(\frac{4}{\pi}+1\big)\log(N)
\leq c_{2}\log(N)^{D}.
\end{align*}
Under \eqref{cond_exp_MC_bound}, $\delta^{\star}(M)\leq c/\sqrt{M}$, $c>0$ yields with $\varepsilon_{tr}=0$, $L_{f}=1$
\begin{align*}
\varepsilon_{t_{u}}&\leq \left(\varepsilon_{int}(\ul{\varrho},N,D,\ol{B})+\varepsilon_{tr}+\varepsilon_{gm}\ol{V}\right)\left(\Lambda_{\ol{N}}L_{f}(1+\varepsilon_{gm})\right)^{n+1-u}\\
&\leq \left(c_{1}\varrho^{-N}+c\ol{V}M^{-0.5}\right)\left(c_{2}\log(N)^{D}(1+cM^{-0.5})\right)^{n}\\
&\leq \tilde{c}_{1}\varrho^{-N}\log(N)^{D\,n}\ +\ \tilde{c}_{2}\log(N)^{D\,n}M^{-0.5}
\end{align*}
and this converges towards zero for $N\rightarrow\infty$ if $\sqrt{M}>\log(N)^{D\,n}$. If \eqref{cond_exp_delta_bound} holds we have $\varepsilon_{gm}=\ol{\delta}$ and the term $M^{-0.5}$ is replaced by $\ol{\delta}$.
\end{proof}

The following proposition provides conditions under which the value function has an analytic extension to some generalized Bernstein ellipse and Assumptions \ref{assumption_analytic_value} hold.
\begin{proposition}\label{Cond_analy_DPP_nosplit}
Consider a DPP as defined in \eqref{DPP_1} and \eqref{DPP_2} with equidistant time-stepping and $g(t,x)=g(x)$. Let $X=(X_{t})_{0\leq t\leq T}$ be a Markov process with stationary increments. Assume $e^{\langle \eta,\cdot\rangle}g(\cdot)\in L^{1}(\mathbb{R}^{D})$ for some $\eta\in\mathbb{R}^{D}
$ and $g$ has an analytic extension to the generalized Bernstein ellipse $\mathcal{B}(\mathcal{X},\varrho_{g})$. Furthermore, assume $f:\mathbb{R}\times\mathbb{R}\rightarrow\mathbb{R}$ has an analytic extension to $\mathbb{C}^{2}$. If
\begin{itemize}
\item[(i)] the characteristic function $\varphi^{x}$ of $X_{\Delta t}$ with $X_{0}=x$ is in $L^{1}(\mathbb{R}^{D})$ for every $x\in\mathcal{X}$,
\item[(ii)] for every $z\in\mathbb{R}^{D}$ the mapping $x\mapsto\varphi^{x}(z-i\eta)$ has an analytic extension to $B(\mathcal{X},\varrho_{\varphi})$ and there are constants $\alpha\in(1,2]$ and $c_{1},c_{2}>0$ such that $\sup_{x\in B(\mathcal{X},\varrho_{\varphi})}|\varphi^{x}(z)|\leq c_{1}e^{-c_{2}|z|^{\alpha}}$ for all $z\in\mathbb{R}^{D}$,
\end{itemize}
then the value function $x\mapsto V_{t_{u}}(x)$ of the DPP has an analytic extension to $B(\mathcal{X},\varrho)$ with $\varrho=\varrho_{g}$.
\end{proposition}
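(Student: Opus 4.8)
The plan is to argue by backward induction on $u$, carrying through the recursion the stronger hypothesis that $x\mapsto V_{t_{u+1}}(x)$ has an analytic extension to the generalized Bernstein ellipse $\mathcal{B}(\mathcal{X},\varrho)$, bounded there, \emph{and} that $e^{\langle\eta,\cdot\rangle}V_{t_{u+1}}(\cdot)\in L^{1}(\mathbb{R}^{D})$ --- the latter being what makes the damped Fourier representation below available at every step. The base case $u+1=n$ holds since $V_{t_n}=g$ satisfies both properties by hypothesis (with $\varrho_g\geq\varrho$). For the inductive step, write $V_{t_u}(x)=f\big(g(x),C_{t_u}(x)\big)$ with continuation value $C_{t_u}(x):=\EE[V_{t_{u+1}}(X_{t_{u+1}})\mid X_{t_u}=x]$. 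Because $f$ is entire and $g$ already extends analytically to $\mathcal{B}(\mathcal{X},\varrho_g)$, and compositions of analytic maps are analytic, the whole step reduces to (a) showing that $C_{t_u}$ extends analytically and boundedly to $\mathcal{B}(\mathcal{X},\varrho)$, and then (b) verifying $e^{\langle\eta,\cdot\rangle}V_{t_u}\in L^{1}$ so the induction can continue.

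For (a) the idea is to damp and Fourier-transform the value function. Put $\wsch(z):=\int_{\mathbb{R}^{D}}e^{i\langle z,y\rangle}e^{\langle\eta,y\rangle}V_{t_{u+1}}(y)\,dy$, which is bounded by the integrability hypothesis; inversion gives $V_{t_{u+1}}(y)=(2\pi)^{-D}\int e^{-i\langle z-i\eta,y\rangle}\wsch(z)\,dz$ (where $\wsch\notin L^{1}$, e.g. at kinks of $V_{t_{u+1}}$, the inversion is read in the usual Ces\`aro/mollified sense of Fourier pricing). Inserting this into $C_{t_u}$, using the Markov property together with time-homogeneity of the transitions and the equidistant grid --- so that $\EE[e^{i\langle w,X_{t_{u+1}}\rangle}\mid X_{t_u}=x]=\varphi^{x}(w)$ independently of $u$ --- and interchanging expectation and integral yields
\[
C_{t_u}(x)=\frac{1}{(2\pi)^{D}}\int_{\mathbb{R}^{D}}\varphi^{x}(-z+i\eta)\,\wsch(z)\,dz
\]
(up to the sign of $\eta$, which is matched to assumption (ii) via $z\mapsto -z$ and the paper's Fourier convention); assumption (i), giving a bounded transition density, together with the decay bound of (ii) justify the interchange.

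The heart of the proof is then the analytic continuation in $x$. Fix $z$: by (ii) the integrand $x\mapsto\varphi^{x}(-z+i\eta)\wsch(z)$ extends analytically to $\mathcal{B}(\mathcal{X},\varrho_\varphi)$; and by the sub-exponential bound of (ii) --- transferred, if necessary, from the real axis to the damped contour $\mathbb{R}^{D}-i\eta$ by analyticity of $\varphi^{x}$ in the intervening strip and a Phragm\'en--Lindel\"of argument --- the family $\{\,|\varphi^{x}(-z+i\eta)\wsch(z)|\,\}_{x\in\mathcal{B}(\mathcal{X},\varrho_\varphi)}$ is dominated by the $z$-integrable function $c_1'\,\|\wsch\|_\infty\, e^{-c_2'|z|^{\alpha}}$. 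This is exactly where $\alpha>1$ is used: complexifying $x$ into the Bernstein ellipse and shifting the contour by $i\eta$ costs a factor of order $e^{C|z|}$, which is swallowed by $e^{-c_2'|z|^{\alpha}}$ precisely because $\alpha>1$ (and $\alpha\leq 2$ is the only available range for a genuine characteristic function). Differentiating under the integral sign, or Morera's theorem in each complex variable in turn with Hartogs' theorem for joint analyticity, then shows $x\mapsto C_{t_u}(x)$ is analytic and bounded on $\mathcal{B}(\mathcal{X},\varrho_\varphi)$, which contains $\mathcal{B}(\mathcal{X},\varrho_g)=\mathcal{B}(\mathcal{X},\varrho)$ ($g$'s ellipse being the binding constraint; otherwise replace $\varrho_g$ by the componentwise minimum of $\varrho_g$ and $\varrho_\varphi$). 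Composing with $f$ and $g$ gives the analytic, bounded extension of $V_{t_u}$ to $\mathcal{B}(\mathcal{X},\varrho)$, i.e. Assumptions \ref{assumption_analytic_value}; finally $e^{\langle\eta,\cdot\rangle}V_{t_u}\in L^{1}$ follows from $e^{\langle\eta,\cdot\rangle}g\in L^{1}$, boundedness of $C_{t_u}$ on $\mathbb{R}^{D}$, and the controlled growth of $f$ relevant here (for instance $f=\max$ in the American-option case), which closes the induction.

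The main obstacle is precisely the uniform domination just described: one must exhibit a single $z$-integrable majorant of the analytically continued integrand $\varphi^{x}(-z+i\eta)\wsch(z)$ valid simultaneously for every complex $x$ in the Bernstein polyellipse --- which is the very reason for the sub-exponential decay assumption $|\varphi^{x}(z)|\leq c_1 e^{-c_2|z|^{\alpha}}$ with $\alpha>1$ uniform over $\mathcal{B}(\mathcal{X},\varrho_\varphi)$ --- together with the bookkeeping needed to move this decay onto the damped contour and to propagate the damped $L^{1}$-integrability of the value function through the backward recursion so that the Fourier representation remains legitimate at each step.
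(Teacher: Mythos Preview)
Your approach is essentially the paper's: backward induction carrying both analyticity on a Bernstein polyellipse and the damped integrability $e^{\langle\eta,\cdot\rangle}V_{t_{u+1}}\in L^1(\mathbb{R}^D)$, with the continuation value handled via its damped Fourier representation. The paper does not spell out the dominated-convergence/Morera argument you give; it simply checks that the hypotheses of \cite[Theorem~3.2, Conditions~3.1]{GassGlauMahlstedtMair2018} are met --- (A1) is the weighted $L^1$ hypothesis, (A2) is the boundedness of $\widehat{V}_{t_{u+1}}(-z-i\eta)$, and (A3)--(A4) are exactly your (i)--(ii) --- and invokes that result. Your unpacking of this step, including the role of $\alpha>1$, is correct in spirit. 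Note also that the paper's proof ends with $\varrho=\min\{\varrho_g,\varrho_\varphi\}$, in line with your parenthetical remark, even though the statement records $\varrho=\varrho_g$.

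There is, however, a genuine gap in your closing step. To propagate $e^{\langle\eta,\cdot\rangle}V_{t_u}\in L^1$ you appeal to ``boundedness of $C_{t_u}$ on $\mathbb{R}^D$''. That does not suffice: $e^{\langle\eta,\cdot\rangle}$ times a bounded function is not integrable. The paper instead uses the Lipschitz continuity of $f$ directly,
\[
\big\|e^{\langle\eta,\cdot\rangle}V_{t_u}\big\|_{L^1}\;\le\;L_f\Big(\big\|e^{\langle\eta,\cdot\rangle}g\big\|_{L^1}+\big\|e^{\langle\eta,\cdot\rangle}V_{t_{u+1}}\big\|_{L^1}\Big),
\]
i.e.\ it bounds $e^{\langle\eta,\cdot\rangle}C_{t_u}$ in $L^1$ via the induction hypothesis on $V_{t_{u+1}}$ rather than via boundedness of $C_{t_u}$ (the stationary-increment/convolution structure is what makes the weighted $L^1$-norm of $C_{t_u}$ comparable to that of $V_{t_{u+1}}$). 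Replace your boundedness argument by this Lipschitz estimate and the induction closes.
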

\begin{proof}
At $T$ the value function $x\mapsto V_{T}(x)$ is analytic since $V_{T}(x)=g(x)$ and $g$ has an analytic extension by assumption. Moreover, $e^{\langle \eta,\cdot\rangle}g(\cdot)\in L^{1}(\mathbb{R}^{D})$ for some $\eta\in\mathbb{R}^{D}$. We assume $e^{\langle \eta,\cdot\rangle}V_{t_{u+1}}(\cdot)\in L^{1}(\mathbb{R}^{D})$ and $V_{t_{u+1}}$ has an analytic extension to $B(\mathcal{X},\varrho)$. Then the function
\begin{align*}
x\mapsto V_{t_u}(x)=f\left(g(x), \EE[V_{t_{u+1}}(X_{t_{u+1}})\vert X_{t_u}=x]\right)
\end{align*}
is analytic if $x\mapsto \EE[V_{t_{u+1}}(X_{t_{u+1}})\vert X_{t_u}=x]=\EE[V_{t_{u+1}}(X_{\Delta t}^{x})]$ has an analytic extension. From \cite[Conditions 3.1]{GassGlauMahlstedtMair2018} we obtain conditions (A1)-(A4) under which a function of the form $(p^{1},p^{2})\mapsto\EE[f^{p^{1}}(X^{p^{2}})]$ is analytic. In our case we only have the parameter $p^{2}=x$ and so $X^{p^{2}}=X_{\Delta t}^{x}$. 
Condition (A1) is satisfied since $e^{\langle \eta,\cdot\rangle}V_{t_{u+1}}(\cdot)\in L^{1}(\mathbb{R}^{D})$ and for (A2) we have to verify that $|\widehat{V}_{t_{u+1}}(-z-i\eta)|\leq c_{1}e^{c_{2}|z|}$ for constants $c_{1},c_{2}>0$. 
\begin{align*}
|\widehat{V}_{t_{u+1}}(-z-i\eta)|&=\Big|\int\limits_{\mathbb{R}^{D}}e^{i\langle y,-z-i\eta\rangle}V_{t_{u+1}}\text{d}y\Big|\\
&\leq\int\limits_{\mathbb{R}^{D}}\vert e^{-i\langle y,z\rangle}\vert \left|e^{\langle y,\eta\rangle}V_{t_{u+1}}(y)\right|\text{d}y\\
&\leq \vert\vert e^{\langle \eta,\cdot\rangle}V_{t_{u+1}}(\cdot)\vert\vert_{L^{1}}
\end{align*}
and thus (A2) holds. The remaining conditions (A3)-(A4)  are equivalent to our conditions (i)-(ii) and \cite[Theorem 3.2]{GassGlauMahlstedtMair2018} yields the analyticity of $x\mapsto \EE[V_{t_{u+1}}(X_{\Delta t}^{x})]$ on the Bernstein ellipse $\mathcal{B}(\mathcal{X},\varrho_{\varphi})$. Hence, $x\mapsto V_{t_{u}}(x)$ is a composition of analytic functions and therefore analytic on the intersection of the domains of analyticity $\mathcal{B}(\mathcal{X},\varrho_{\varphi})\cap \mathcal{B}(\mathcal{X},\varrho_{g})=\mathcal{B}(\mathcal{X},\varrho)$ with $\varrho=\min\{\varrho_{g},\varrho_{\varphi}\}$.\\

It remains to prove that $e^{\langle \eta,\cdot\rangle}V_{t_{u}}(\cdot)\in L^{1}(\mathbb{R}^{D})$. Here the Lipschitz continuity of $f$ yields
\begin{align*}
\vert\vert e^{\langle \eta,\cdot\rangle}V_{t_{u}}(\cdot)\vert\vert_{L^{1}}\leq L_{f}\left(\vert\vert e^{\langle \eta,\cdot\rangle}g(\cdot)\vert\vert_{L^{1}} + \vert\vert e^{\langle \eta,\cdot\rangle}V_{t_{u+1}}(\cdot)\vert\vert_{L^{1}}\right)<\infty.
\end{align*}
\end{proof}

Often, the discrete time problem \eqref{DPP_1} and \eqref{DPP_2} is an approximation of a continuous time problem and thus, we are interested in the error behaviour for $n\rightarrow\infty$. 
\begin{remark}
Assume the setup of Corollary \ref{Error_DPP_Algo1_Lip_no}. Moreover, assume that $\varepsilon_{tr}=\varepsilon_{gm}=0$. If we let $N$ and $n$ go to infinity, we have to ensure that the error bound tends to zero. We use that $\varepsilon_{int}(\ul{\varrho},N,D,\ol{B})\leq C_{1}\varrho^{-\ul{N}}$ for a constant $C_1>0$ and $\ul{N}=\min_i N_i$. The following condition on the relation between $n$ and $N$ ensures convergence
\begin{align*}
n < \frac{\log(\varrho)}{C_{1}D}\cdot\frac{\ul{N}}{\log(\Lambda_{\ol{N}})+\log(L_f)}\,+\,1.
\end{align*}
\end{remark}

\section{Implementational aspects of the Dynamic Chebyshev method}
In this section we discuss several approaches to compute the generalized moments \eqref{eq:generalized_moments} which contain the model dependent part. Moreover, preparing the numerical experiments we tailor the Dynamic Chebyshev method to the pricing of American put options.

\subsection{Derivation of generalized moments}
Naturally, the question arises how the generalized moments \eqref{eq:generalized_moments} can be derived. Here, we present four different ways and illustrate all approaches in the one-dimensional case $\mathcal{X}=[\ul{x},\ol{x}]$. Similar formulas can be obtained for a multidimensional domain.\\

\noindent
\textbf{Probability density function}

For the derivation of $\EE[p_j(X_{t_{u+1}})\vert X_{t_u}=x_k]$, let the density function of the random variable $X_{t_{u+1}}\vert X_{t_u}=x_{k}$ be given as $f^{u,k}(x)$. Then, the conditional expectation can be derived by solving an integral,
\begin{align*}
\EE[p_j(X_{t_{u+1}})\vert X_{t_u}=x_k]
=\int_{\underline{x}}^{\overline{x}}T_{j}(\tau^{-1}_{[\ul{x},\ol{x}]}(y))\,f^{u,k}(y)\text{d}y
\end{align*} 
using $p_j(y)=T_j(\tau^{-1}_{\mathcal{X}}(y))1_{\mathcal{X}}(y)$. This approach is rather intuitive and easy to implement.\\ 

\noindent
\textbf{Fourier Transformation}

Assume the process $X$ has stationary increments and the characteristic function $\varphi$ of $X_{\Delta t}$ is explicitly available. We apply Parseval's identity, see \cite{Rudin1973}, and use Fourier transforms 
\begin{align*}
\EE[p_j(X_{t_{u+1}})\vert X_{t_u}=x_{k}]
=\int_{-\infty}^{\infty}p_{j}(x+x_{k})F(\dx)=\frac{1}{2\pi}\int_{-\infty}^{\infty}\widehat{p_{j}^{x_{k}}} (\xi)\varphi(-\xi)\text{d}\xi,
\end{align*}
where $p_{j}^{x_{k}}(x)=p_{j}(x+x_{k})$. Using the definition of $\tau_{[\underline{x},\overline{x}]}$, we can express the Fourier transform of $p_{j}^{x_{k}}(x)$ with the help of the Chebyshev polynomial $T_j(y)$. This yield
\begin{align}\label{eq:GM_with_Fourier}
\EE[p_j(X_{t_{u+1}})\vert X_{t_u}=x_{k}]
=\frac{1}{2\pi}e^{-i\xi x^k}e^{i\xi (\overline{x}-\frac{\overline{x}-\underline{x}}{2})}\frac{\overline{x}-\underline{x}}{2}\int_{-\infty}^{\infty}\widehat{T}_j\Big(\frac{\overline{x}-\underline{x}}{2}\xi\Big)\varphi(-\xi)\text{d}\xi.
\end{align}
The Fourier transform of the Chebyshev polynomials $\widehat{T_{j}}$ are presented in \cite{dominguez2011stability} and the authors also provide a Matlab implementation.\\ 

\noindent
\textbf{Truncated moments}

In this approach, we use that each one-dimensional Chebyshev polynomial can be represented as a sum of monomials, i.e.
\begin{align*}
T_{j}(x)=\sum_{l=0}^{j} a_{l} x^l,\ j\in\mathbb{N}.
\end{align*}
The coefficients $a_{l},\ l=0,\ldots,j$, can easily be derived using the \textit{chebfun} function \textit{poly()}, see \cite{driscoll2014chebfun}. Then, 
\begin{align*}
\EE[p_j(X_{t_{u+1}})\vert X_{t_u}=x_k]
&=\EE[T_j(\tau^{-1}_{\mathcal{X}}(X_{t_{u+1}}))1_{\mathcal{X}}(X_{t_{u+1}})\vert X_{t_u}=x_k]\\
&=\sum_{l=0}^{j} a_{l}\EE[(\tau^{-1}_{\mathcal{X}}(X_{t_{u+1}}))^{l}1_{\mathcal{X}}(X_{t_{u+1}})\vert X_{t_u}=x_k]
\end{align*} 
As $\tau_{\mathcal{X}}$ is linear the computation of the generalized moments has thus been reduced to deriving truncated moments.\\

\noindent
\textbf{Monte-Carlo simulation}

Lastly, especially in cases for which neither a probability density function, nor a characteristic function of the underlying process is given, Monte-Carlo simulation is a suitable choice. For every nodal point $x_{k}$ one simulates $N_{MC}$ paths $X_{t_{u+1}}^{i}$ of $X_{t_{u+1}}$ with starting value $X_{t_{u}}=x_{k}$. These simulations can then be used to approximate
\begin{align*}
\Gamma_{t_u,t_{u+1}}(p_j)(x^k)=\EE[p_j(X_{t_{u+1}})\vert X_{t_u}=x_{k}]
\approx\frac{1}{N_{MC}}\sum_{i=1}^{N_{MC}}p_j(X_{t_{u+1}}^{i})
\end{align*}
for every $j\in\mathcal{J}$. For an overview of Monte-Carlo simulation from SDEs and variance reduction techniques we refer to \cite{glasserman2003monte}.

\subsection{American Put Option}
In the numerical section we use the Dynamic Chebyshev method to price American put options. Assuming an asset model of the form $S_{t}=e^{X_{t}}$, the DPP becomes
\begin{align*}
V_{T}(x)&=(K-e^{x})^{+}\\
V_{t_u}(x)&=\max\left\{(K-e^{x})^{+}, e^{-r(t_{u+1}-t_{u})}\EE[V_{t_{u+1}}(X_{t_{u+1}})\vert X_{t_u}=x]\right\}.
\end{align*}
Typically, the support of the underlying process $X_{t}$ is $\mathbb{R}$ and restricting the domain to $\mathcal{X}$ introduces a truncation error. We reduce this error by exploiting the asymptotic behaviour of the payoff. If $X_{t_{u}}$ is below the exercise boundary the option is exercised at the value $K-e^{X_{t_{u}}}$ which we exploit for $X_{t_{u}}<\ul{x}$. The function $x\mapsto V_{t_{u}}$ tends to zero from above for $x\rightarrow\infty$ and thus for $\ol{x}$ large enough the truncation to zero for $x>\ol{x}$ is justified. Hence, we introduce the following modification of the Dynamic Chebyshev method:
\begin{align*}
V_{t_{u+1}}(x)&=V_{t_{u+1}}(x)1_{\{x<\ul{x}\}}+V_{t_{u+1}}(x)1_{\{x\in\mathcal{X}\}}+V_{t_{u+1}}(x)1_{\{x>\ol{x}\}}\\
&\approx (K-e^{x})1_{\{x<\ul{x}\}} + \widehat{V}_{t_{u+1}}(x)1_{\{x\in\mathcal{X}\}}
\end{align*}
and thus
\begin{align*}
\EE[V_{t_{u+1}}(X_{t_{u+1}})|X_{t_{u}}=x_{k}]&\approx \EE[(K-e^{X_{t_{u+1}}})1_{\{X_{t_{u+1}}<\ul{x}\}}|X_{t_{u}}=x_{k}]\\
&\quad + \sum_{j=0}^{N}c_{j}(t_{u+1})\Gamma_{j,k}(t_{u})
\end{align*}
for $\ul{x}$ small and $\ol{x}$ large enough. One can precompute $\EE[(K-e^{X_{t_{u+1}}})1_{\{X_{t_{u+1}}<\ul{x}\}}|X_{t_{u}}=x_{k}]$.  We emphasize that similar modifications to reduce the truncation error can be found for other payoff profiles, e.g. for digitals, butterfly options or any other combination of different put options.

Moreover, we also modify the first time step. Instead of approximating the payoff with Chebyshev polynomials at $t_{n}=T$ we just use the price of a European option to compute the continuation value at $t_{n-1}$. The kink of the payoff is in this case "smoothed" and convergence is improved.\\

The option's sensitivities Delta and Gamma can be computed by tanking the first or second derivative of
\begin{align*}
S\mapsto \widehat{V}_{0}(\log(S))=\sum_{j\in\mathcal{J}}c_{j}(t_0)p_{j}(\log(S)).
\end{align*}
Thus Delta and Gamma are expressed as the sum of derivatives of Chebyshev polynomials. In particular, their derivation comes without any additional computational costs.

\section{Numerical experiments}
In this section, we use the Dynamic Chebyshev method to price American put options and we numerically investigate the convergence of the method. Moreover, we compare the method with the Least-squares Monte-Carlo method of \cite{longstaffschwartz}.

\subsection{Stock price models}
For the convergence analysis we use three different stock price models.\\

\textbf{The Black-Scholes model:}\\
In the classical model of \cite{blackscholes} the stock price process is modelled by the SDE
\begin{align*}
\text{d}S_{t}=r S_{t}\text{d}t + \sigma S_{t}\text{d}W_{t}
\end{align*}
where $r$ is the risk-free interest rate and $\sigma>0$ is the volatility. 
In this model the log-returns $X_{t}=\log(S_t)$ are normally distributed and for the double truncated moments
\begin{align*}
\mathbb{E}[X^{m}1_{[a,b]}(X)] \qquad \text{for} \qquad X\sim\mathcal{N}(\mu,\sigma^{2})
\end{align*}
explicit formulas are available.\cite{KanRobotti2017} present results for the (multivariate) truncated moments and provide a Matlab implementation. \\

\textbf{The Merton jump diffusion model:}\\
The jump diffusion model introduced by \cite{merton1976} adds jumps to the classical Black-Scholes model. For $S_{t}=S_{0}e^{X_{t}}$ the log-returns $X_{t}$ follow a jump diffusion with volatility $\sigma$ and added jumps arriving at rate $\lambda>0$ with normal distributed jump sizes according to $\mathcal{N}(\alpha,\beta^{2})$. The characteristic function of $X_{t}$ is given by
\begin{align*}
\varphi(z)=exp\left(t\left(ibz - \frac{\sigma^{2}}{2}z^{2} + \lambda\left(e^{iz\alpha - \frac{\beta^{2}}{2}z^{2}}-1\right)\right)\right)
\end{align*} 
with risk-neutral drift $b=r-\frac{\sigma^{2}}{2}-\lambda\Big(e^{\alpha+\frac{\beta^{2}}{2}}-1\Big)$.\\

\textbf{The Constant Elasticity of Variance model:}\\
The Constant Elasticity of Variance model (CEV) as stated in \cite{Schroder1989} is a local volatility model based on the stochastic process
\begin{align}\label{CEV_model_SDE}
\text{d}S_{t}=r S_{t}\text{d}t + \sigma S_{t}^{\beta/2}\text{d}W_{t} \quad \text{for} \quad \beta>0.
\end{align}
Hence the stock volatility $\sigma S_{t}^{(\beta-2)/2}$ depends on the current level of the stock price. For the special case $\beta=2$ the model coincides with the Black-Scholes model. However, from market data one typically observes a $\beta<2$. The CEV-model is one example of a model which has neither a probability density, nor a characteristic function in closed-form.

\subsection{Convergence analysis}
In this section we investigate the convergence of the Dynamic Chebyshev method. We price American put options along with the options' Delta and Gamma in the Black-Scholes and the Merton jump diffusion model, where we can use the COS method of \cite{fang2009pricing} as benchmark. The COS method is based on the Fourier-cosine expansion of the density function and provides fast and accurate results for the class of L\'evy models.

For the experiments, we use the following parameter sets in the Black-Scholes model 
\begin{align*}
K=100, \quad r=0.03, \quad \sigma=0.25, \quad T=1,
\end{align*}
and for the Merton jump diffusion model
\begin{align*}
K=100, \quad r=0.03, \quad \alpha=-0.5, \quad \beta=0.4,\quad \sigma=0.25, \quad \lambda=0.4
\end{align*}
and we use $32$ time steps.\\

For both models the generalized moments are computed by the Fourier approach as stated in \eqref{eq:GM_with_Fourier}. We truncate the integral at $\vert \xi\vert \leq 250$ and use Clenshaw-Curtis with $500$ nodes for the numerical integration. For the Fourier transform of the Chebyshev polynomials the implementation of \cite{dominguez2011stability} is used. 
We run the Dynamic Chebyshev method for an increasing number of Chebyshev nodes $N=50, 100,\ldots,750$. Then, option prices and their sensitivities delta and gamma are calculated on a grid of different values of $S_{0}$ equally distributed between $60\%$ and $140\%$ of the strike $K$. The resulting prices and Greeks are compared using the COS method as benchmark and the maximum error over the grid is calculated. Here we use the implementation provided in \cite{von2015benchop}.\\

Figure \ref{fig:DC_error_decay_Four_Prices_Greeks} shows the error decay for the Black-Scholes model (left hand side) and the Merton model (right hand side). We observe that the method converges and an error below $10^{-3}$ is reached for $N=300$ Chebyshev nodes. The experiments confirm that the method can be used for an American put option.

\begin{figure}[H]
\centering
  \begin{minipage}{0.45\textwidth} 
     \centering 
     \includegraphics[width=\textwidth]{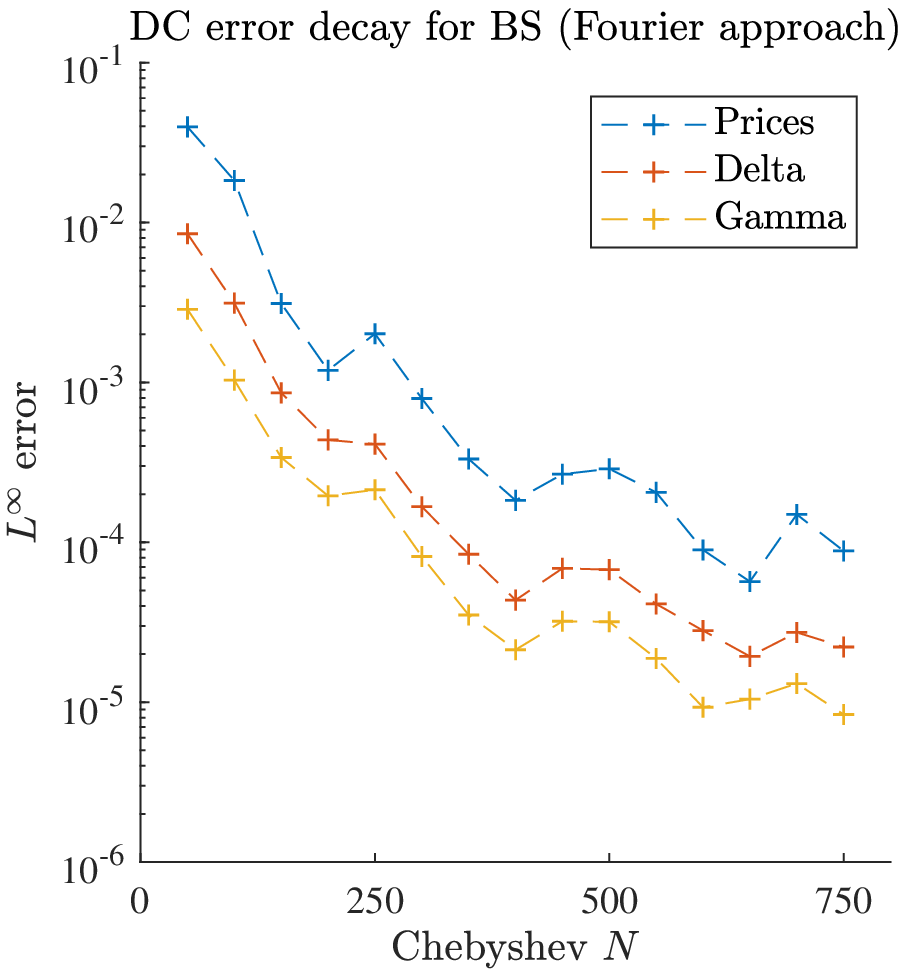} 
 \end{minipage} 
  \begin{minipage}{0.45\textwidth} 
     \centering 
     \includegraphics[width=\textwidth]{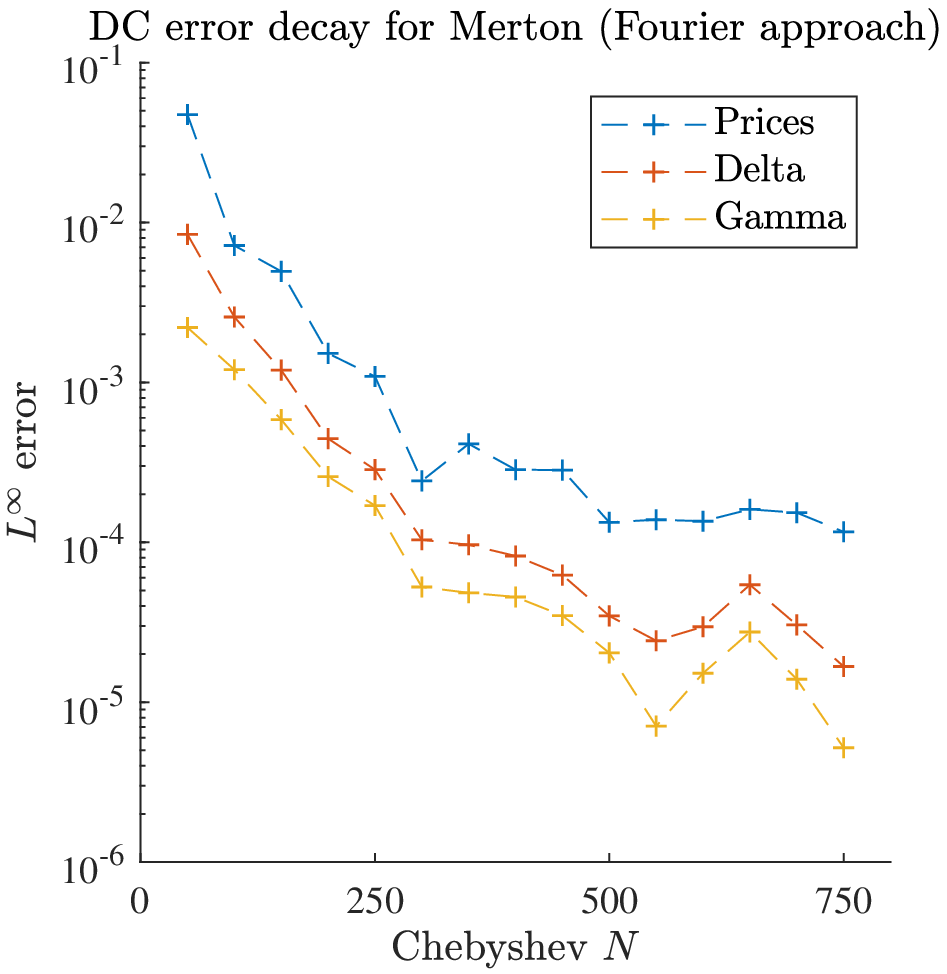} 
  \end{minipage} 
  \caption{Error decay prices Dynamic Chebyshev in the BS model (left) and the Merton model (right). The conditional expectation of the Chebyshev polynomials are calculated with the Fourier transformation.}
  \label{fig:DC_error_decay_Four_Prices_Greeks} 
\end{figure}


\FloatBarrier

\subsection{Dynamic Chebyshev with Monte-Carlo}
So far we empirically investigated the error decay of the method for option price and their derivatives. In this section we compare the Dynamic Chebyshev method with the Least Square Monte-Carlo approach of \cite{longstaffschwartz} in terms of accuracy and runtime. 

\subsubsection{The Black-Scholes model}
As a first benchmark, we use the Black-Scholes model with an interest rate of $r=0.03$ and volatility $\sigma=0.25$. Here, we look at a whole option price surface with varying maturities and strikes. We choose $12$ different maturities between one month and four years given by
\begin{align*}
T\in\{1/12,2/12,3/12,6/12,9/12,1,15/12,18/12,2,30/12,3,4\}
\end{align*} 
and strikes equally distributed between $80\%$ and $120\%$ of the current stock price $S_{0}=100$ in steps of $5\%$. We fix $n=504$ time steps (i.e. exercise rights) per year.\\

We compare the Dynamic Chebyshev method to the Least Squares Monte-Carlo approach. We run both methods for an increasing number of Monte-Carlo paths according to \begin{align}\label{number_MC_paths}
M\in\left\{2500,5000,10000,20000,40000,80000\right\}.
\end{align}
The convergence of the DC method depends on both, the number of nodes $N$ and the number of Monte-Carlo paths $M$. For an optimal convergence behaviour one needs to find a reasonable relationship between these factors. For the following experiments, we fix the number of Chebyshev nodes as $N=\sqrt{2}\sqrt{M}$.

Figure \ref{fig:BS_DC_surface_MC} shows the price surface and the error surface for $N=400$ and $M=80000$. The error was estimated by using the COS method as benchmark. We reach a maximal error below $4*10^{-2}$ on the whole option surface.\\

In Figure \ref{fig:BS_DC_LSM_error_runtime} the $log_{10}$-error is shown as a function of the $log_{10}$-runtime for both methods. The left figure compares the total runtimes and the right figure compares the offline runtime. For the Dynamic Chebyshev method the total runtime includes the offline-phase and the online phase. The offline-phase consists of the simulation of one time step of the underlying asset process $X_{\Delta t}$ for $N+1$ starting values $X_{0}=x_k$ and of the computation of the conditional expectations $E[p_{j}(X_{\Delta t})|X_{0}=x_k]$ for $j,k=0,\ldots,N$. The online phase is the actual pricing of the American option for all strikes and maturities. Similar, the total runtime of the Least-Square Monte-Carlo method includes the simulation of the Monte-Carlo paths (offline-phase) and the pricing of the option via backward induction (online-phase).\\

We observe that the Dynamic Chebyshev method reaches the same accuracy in a much lower runtime. For example, a maximum error of $0.1$ is reached in a total runtime of $0.5s$ with the Dynamic Chebyshev method whereas the LSM approach needs $98s$. This means the Dynamic Chebyshev method is nearly $200$ times faster for the same accuracy.
For the actual pricing in the online phase, the gain in efficiency is even higher. We observe that the Dynamic Chebyshev method outperforms the Least-Square Monte-Carlo method in terms of the total runtime and the pure online runtime. Moreover, we observe that the performance gain from splitting the computation into an offline and an online phase is much higher for the Dynamic Chebyshev method. For instance, in the example above the online runtime of the Dynamic Chebyshev method is $0.05s$ whereas the LSM takes $95s$, a factor of $1900$ times more.

\begin{figure} 
\centering
  \begin{minipage}{0.45\textwidth} 
     \centering 
     \includegraphics[width=\textwidth]{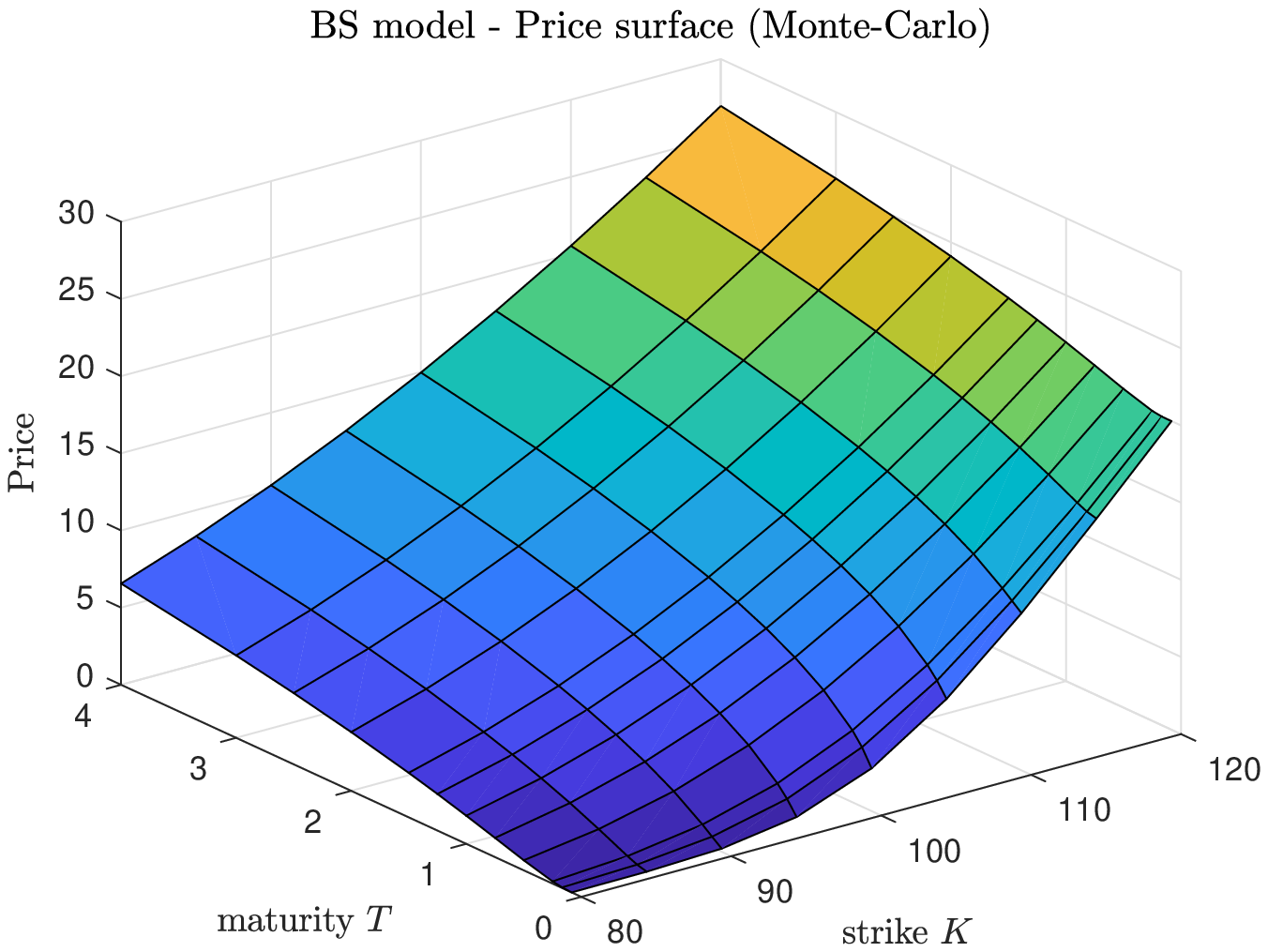} 
 \end{minipage} 
  \begin{minipage}{0.45\textwidth} 
     \centering 
     \includegraphics[width=\textwidth]{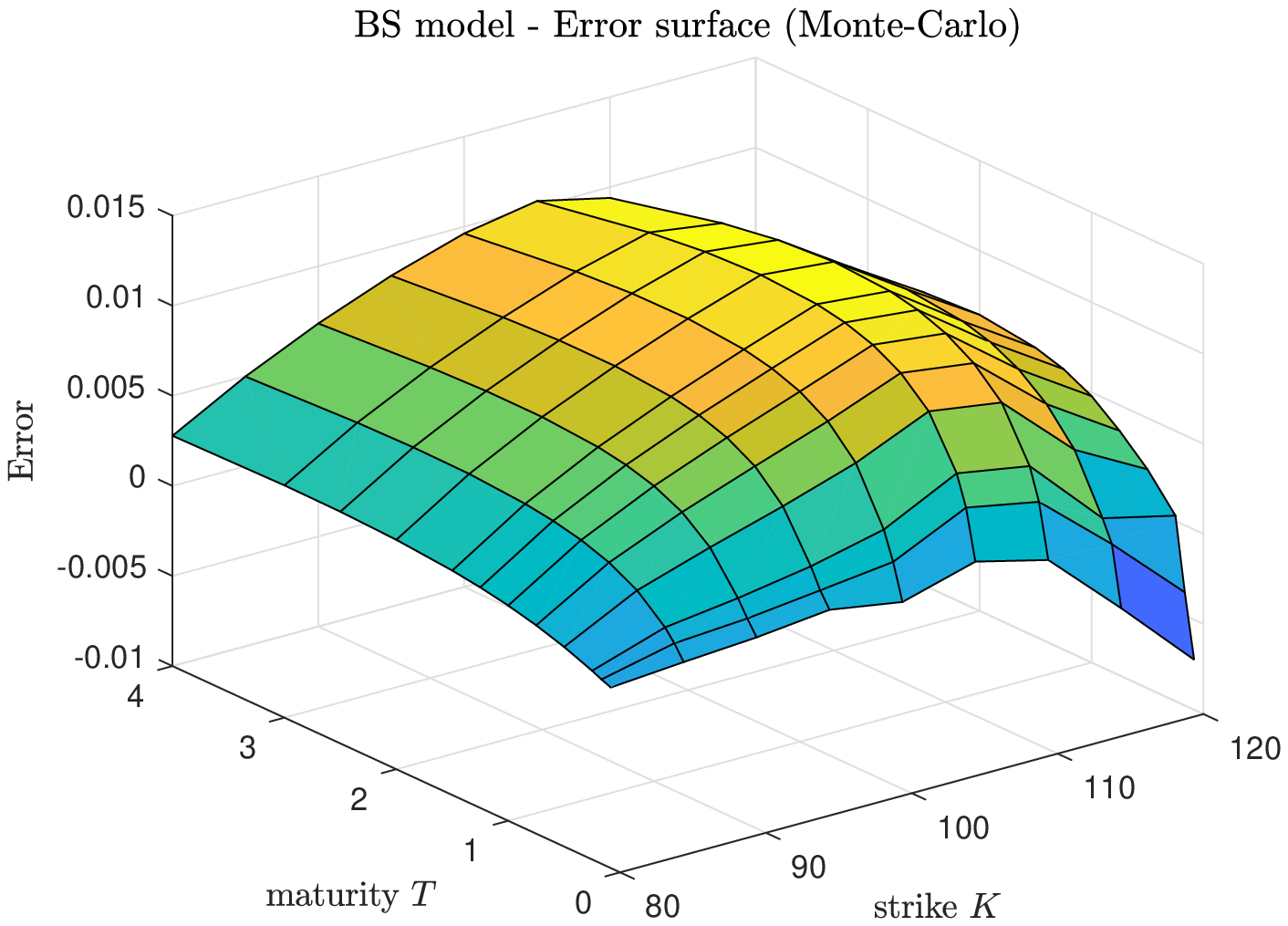} 
  \end{minipage} 
  \caption{Price surface and corresponding error of the Dynamic Chebyshev method in the Black-Scholes model. The conditional expectations are calculated with Monte-Carlo.
  }
  \label{fig:BS_DC_surface_MC} 
\end{figure}

\begin{figure} 
\centering
  \begin{minipage}{0.45\textwidth} 
     \centering 
     \includegraphics[width=\textwidth]{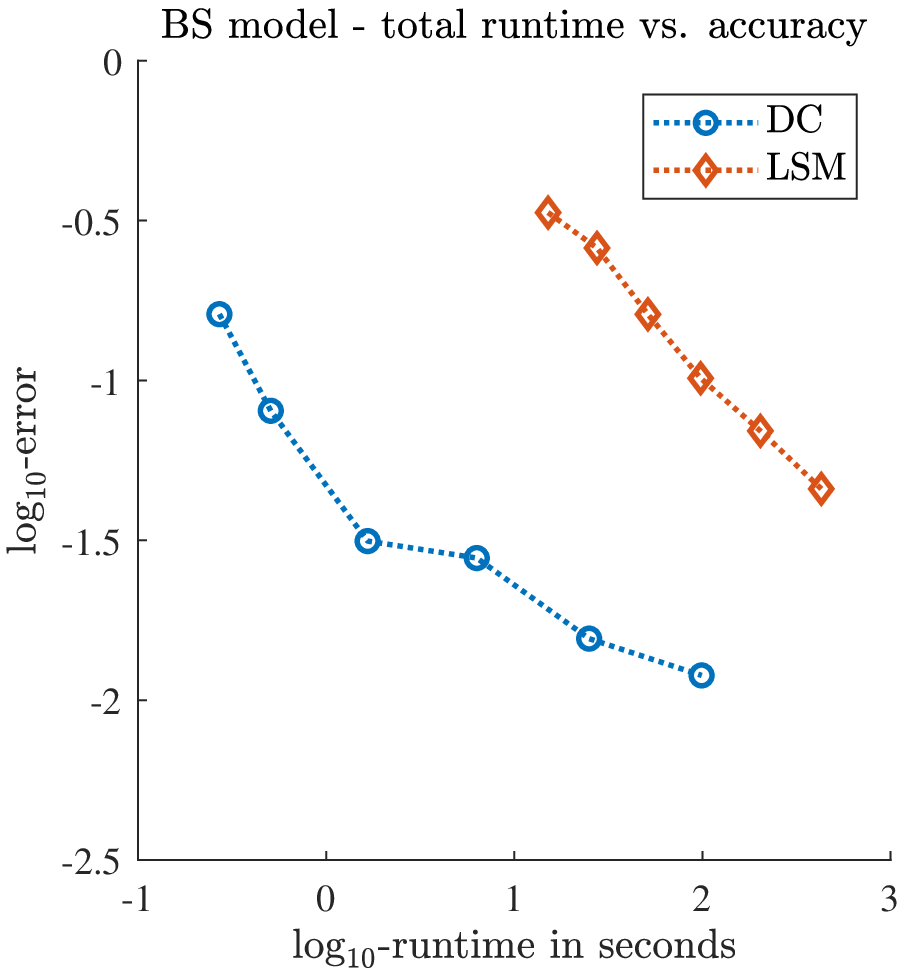} 
 \end{minipage} 
  \begin{minipage}{0.45\textwidth} 
     \centering 
     \includegraphics[width=\textwidth]{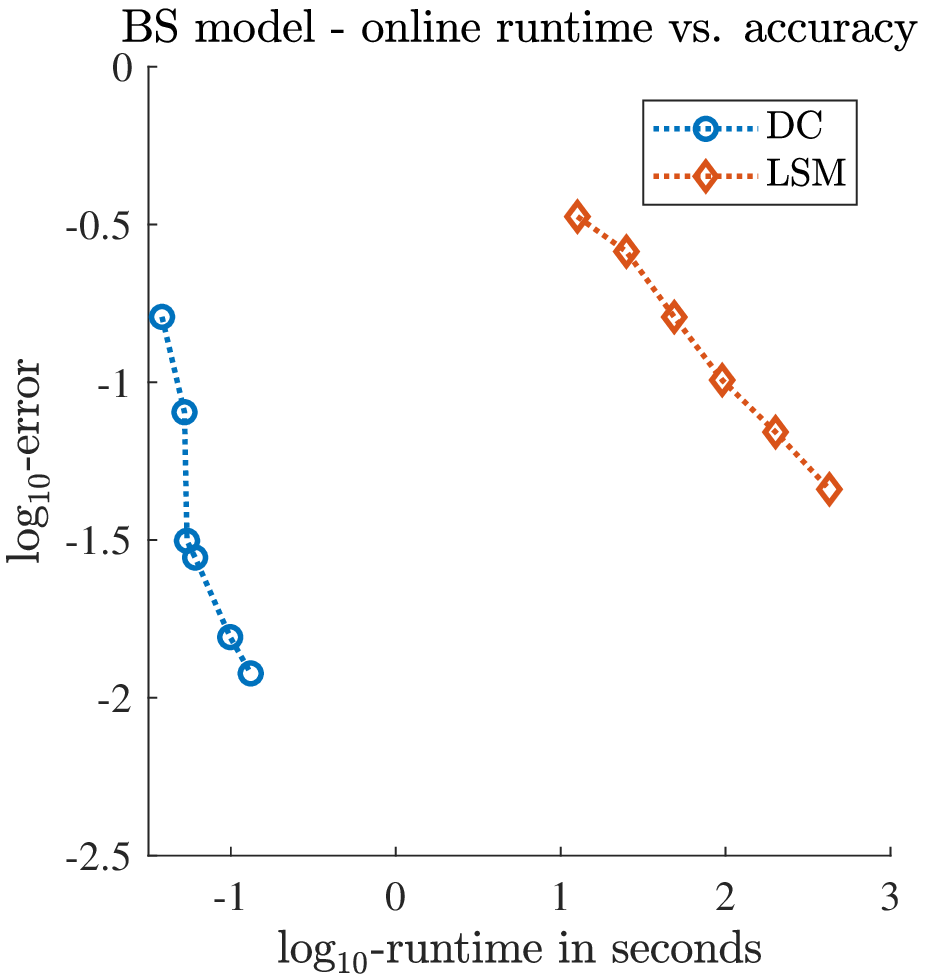} 
  \end{minipage} 
  \caption{Log-Log plot of the total/online runtime vs. accuracy. Comparison of the Dynamic Chebyshev method with the Least-Square Monte-Carlo algorithm.}
  \label{fig:BS_DC_LSM_error_runtime} 
\end{figure}


The main advantage of the Dynamic Chebyshev method is that once the conditional expectations are calculated, they can be used to price the whole option surface. The pure pricing, i.e. the online phase, is highly efficient. Furthermore, one only needs to simulate one time step $\Delta t$ of the underlying stochastic process instead of the complete path. We investigate this efficiency gain by varying the number of options and the number of time steps (exercise rights).
Figure \ref{fig:BS_DC_LSM_runtime_details} compares the total runtime of the DC method with the total runtime of the LSM method for an increasing number of options and for an increasing number time steps. As expected, we can empirically confirm that the efficiency gain by the Dynamic Chebyshev methods increases with number of options and the number of exercise rights. In both cases, the runtime of the DC method stays nearly constant whereas the runtime of the LSM method increases approximately linearly.
\begin{figure}[H] 
\centering
  \begin{minipage}{0.45\textwidth} 
     \centering 
     \includegraphics[width=\textwidth]{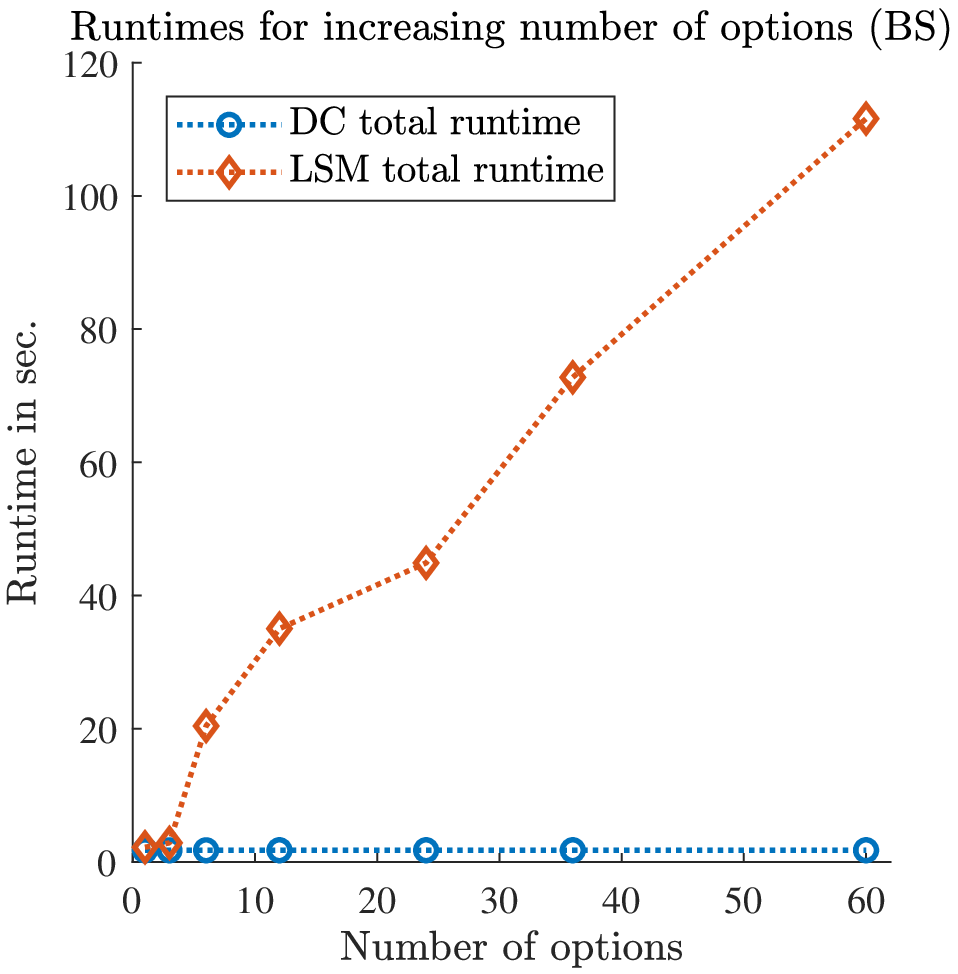} 
 \end{minipage} 
  \begin{minipage}{0.45\textwidth} 
     \centering 
     \includegraphics[width=\textwidth]{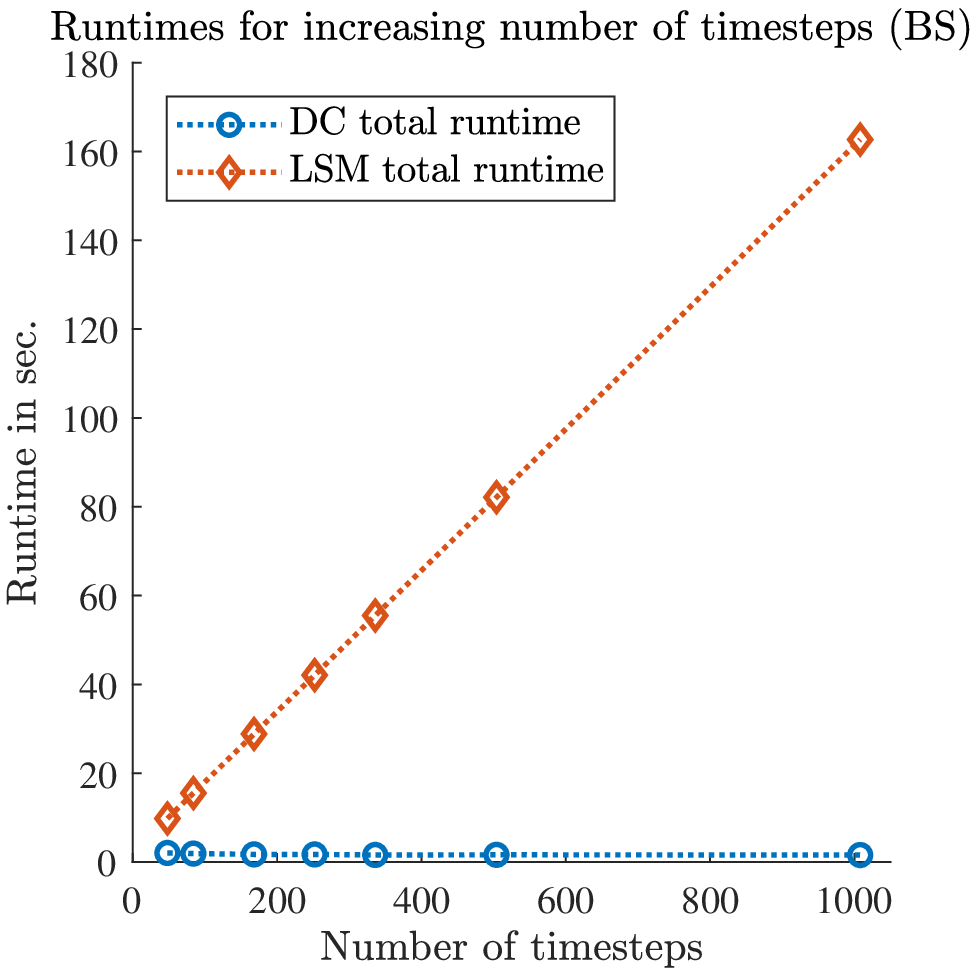} 
  \end{minipage} 
  \caption{Total runtime of the DC and the LSM method for an increasing number of options (left) and an increasing number of timesteps (right).}
  \label{fig:BS_DC_LSM_runtime_details} 
\end{figure}
\FloatBarrier



\subsubsection{The CEV model}
Next, we use the constant elasticity of variance (CEV) model for the underlying stock price process. We perform the same experiments as in the last section. The parameters in the CEV model, as in \eqref{CEV_model_SDE} are the following
\begin{align*}
\sigma=0.25,\quad r=0.03,\quad \beta=1.5.
\end{align*}


Similarly, we compare the Dynamic Chebyshev and the LSM method by computing the prices of an option price surface. We use the same parameter specifications for $K$, $T$ and $n$. We run both methods for an increasing number of Monte-Carlo simulations $M$ and fix $N=\sqrt{2}\sqrt{M}$.
Figure \ref{fig:CEV_DC_surface_MC}  shows the price surface and the error surface for $N=400$ and $M=80000$. The error is calculated using a binomial tree implementation of the CEV model based on \cite{NelsonRamaswamy1990}.\\

\begin{figure}[H]
\centering
  \begin{minipage}{0.45\textwidth} 
     \centering 
     \includegraphics[width=\textwidth]{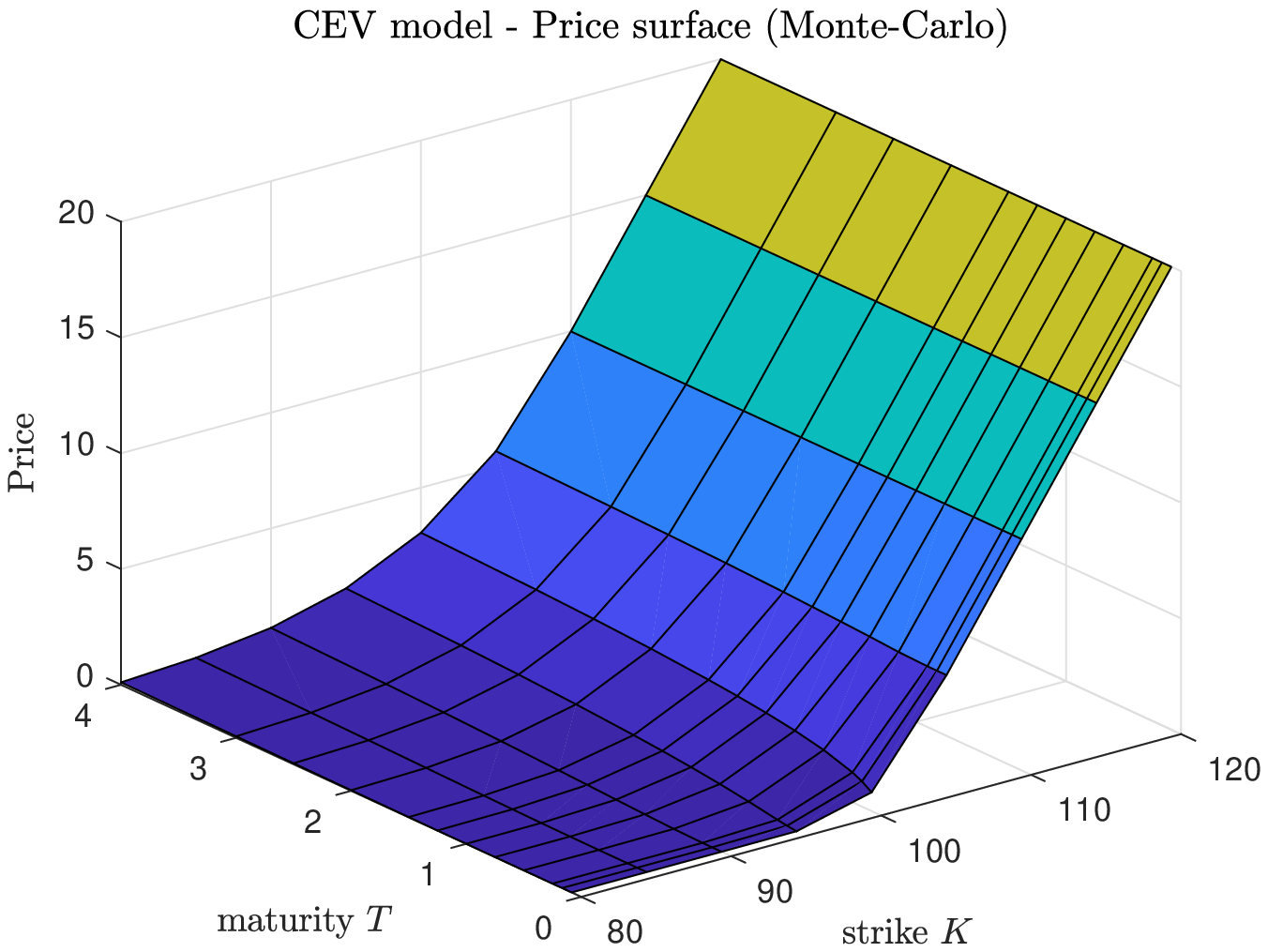} 
 \end{minipage} 
  \begin{minipage}{0.45\textwidth} 
     \centering 
     \includegraphics[width=\textwidth]{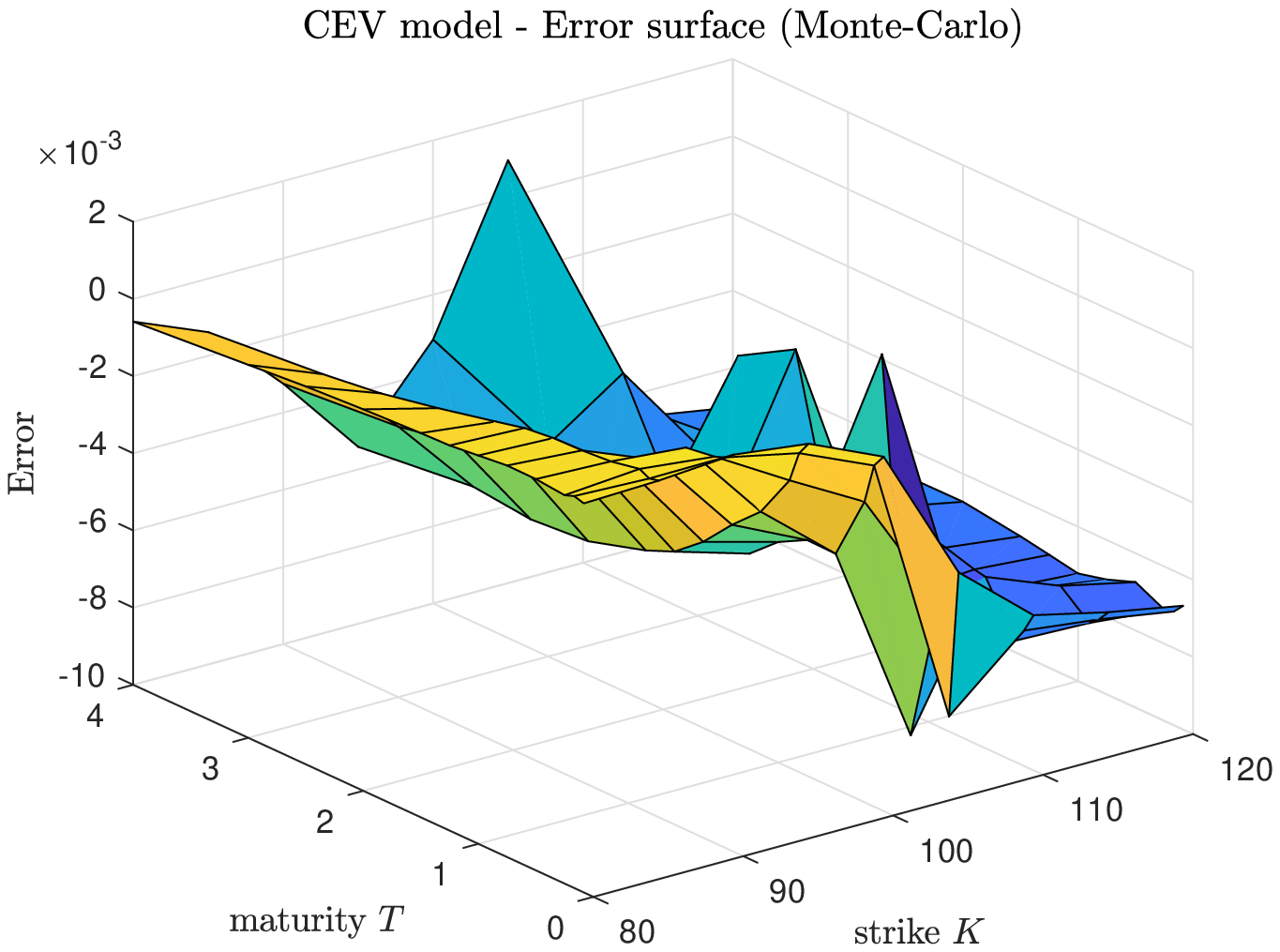} 
  \end{minipage} 
  \caption{Price surface and corresponding error of the Dynamic Chebyshev method in the CEV model. The conditional expectations are calculated with Monte-Carlo.}
  \label{fig:CEV_DC_surface_MC} 
\end{figure} 

In Figure \ref{fig:CEV_DC_LSM_error_runtime} the $log_{10}$-error is shown as a function of the $log_{10}$-runtime for both methods. The left figure compares the total runtimes and the right figure compares the offline runtimes. Again, we observe that the Dynamic Chebyshev method is faster for the same accuracy and it profits more from an offline-online decomposition. For example, the total runtime of the DC method to reach an accuracy below $0.03$ is $3.5s$ whereas LSM takes $136s$. For the online runtimes this out-performance is $1s$ to $122s$.
\begin{figure}[H] 
\centering
  \begin{minipage}{0.45\textwidth} 
     \centering 
     \includegraphics[width=\textwidth]{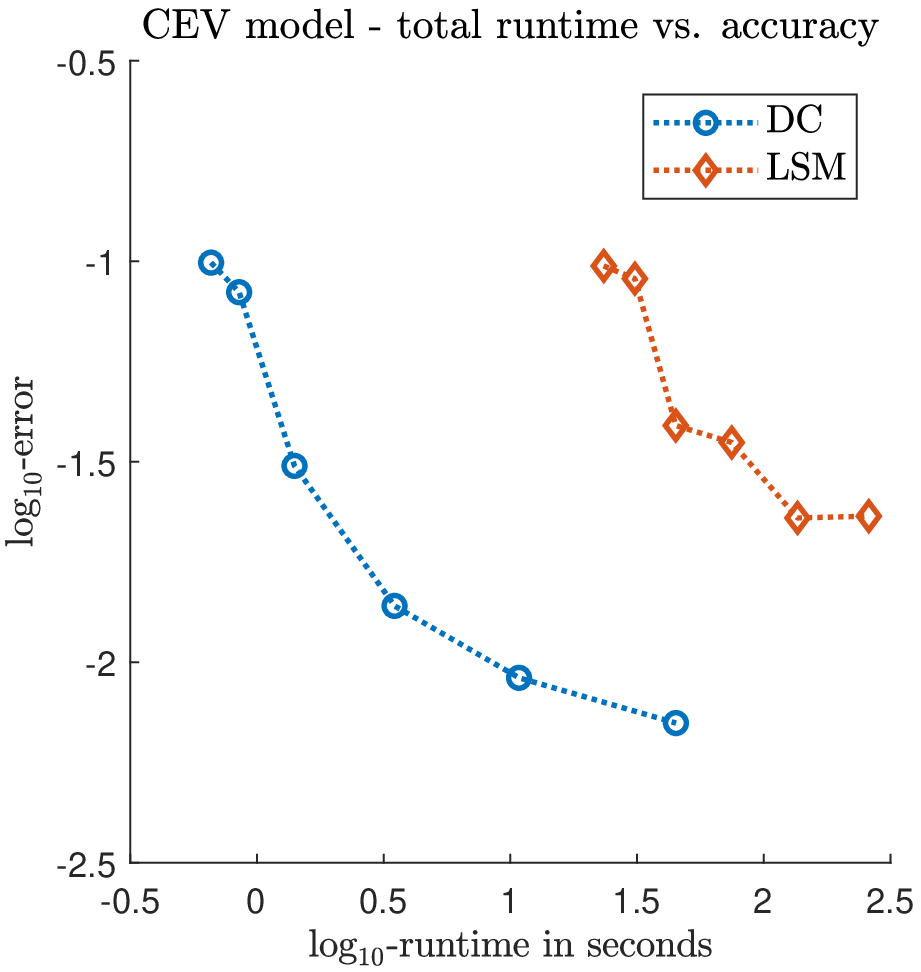} 
 \end{minipage} 
  \begin{minipage}{0.45\textwidth} 
     \centering 
     \includegraphics[width=\textwidth]{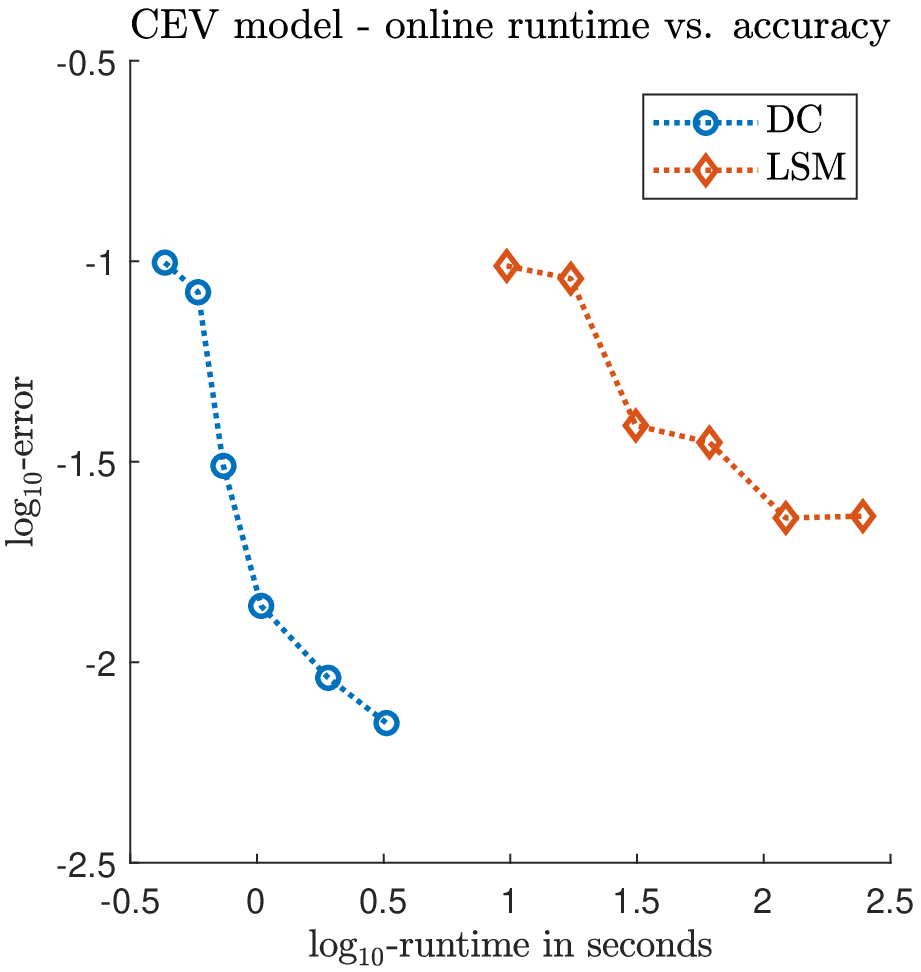} 
  \end{minipage} 
  \caption{Log-Log plot of the total/online runtime vs. accuracy. Comparison of the Dynamic Chebyshev method with the Least-Square Monte-Carlo algorithm.}
  \label{fig:CEV_DC_LSM_error_runtime} 
\end{figure}

Investigating this efficiency gain further, we look at the performance for different numbers of options and time steps (exercise rights). Similarly to the last section, Figure \ref{fig:CEV_DC_LSM_runtime_details} compares the total runtime of the DC method with the total runtime of the LSM method for an increasing number of options and time steps. In both cases, the runtime of the DC method stays nearly constant whereas the runtime of the LSM method increases approximately linearly. This observation is consistent with the findings for the Black-Scholes model in the previous section.
\begin{figure}[H] 
\centering
  \begin{minipage}{0.45\textwidth} 
     \centering 
     \includegraphics[width=\textwidth]{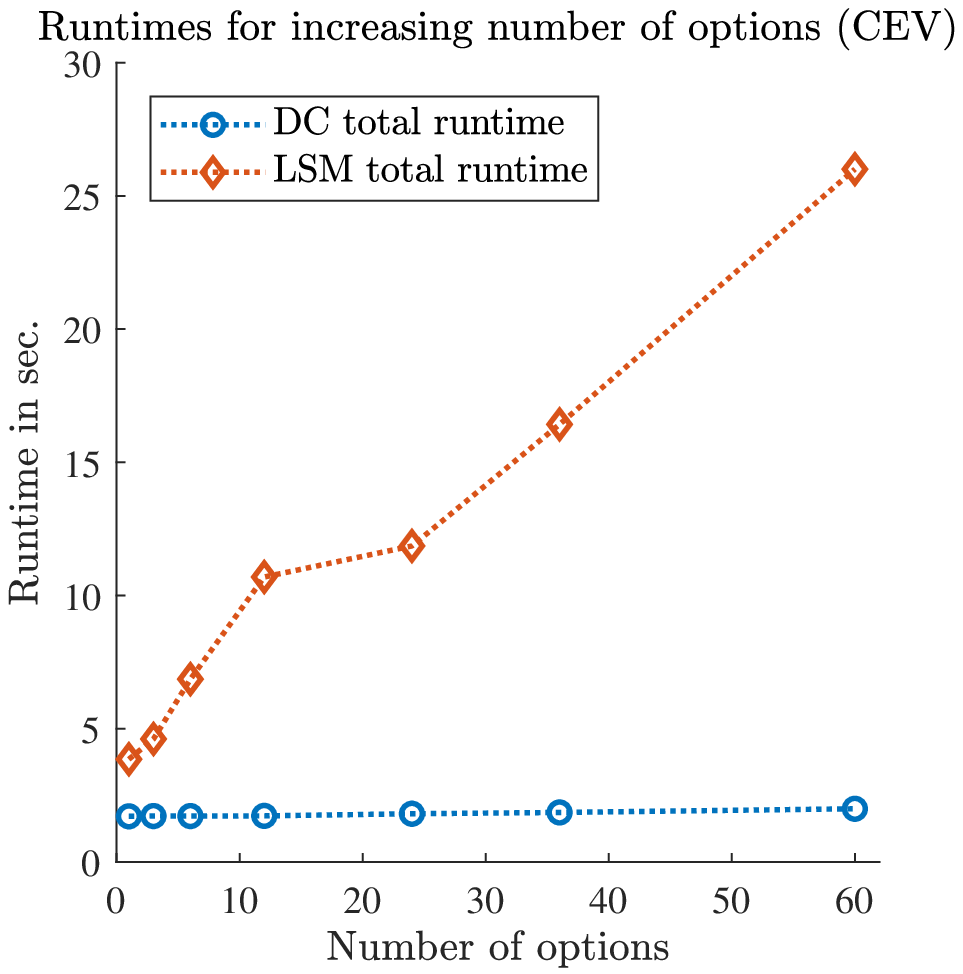} 
 \end{minipage} 
  \begin{minipage}{0.45\textwidth} 
     \centering 
     \includegraphics[width=\textwidth]{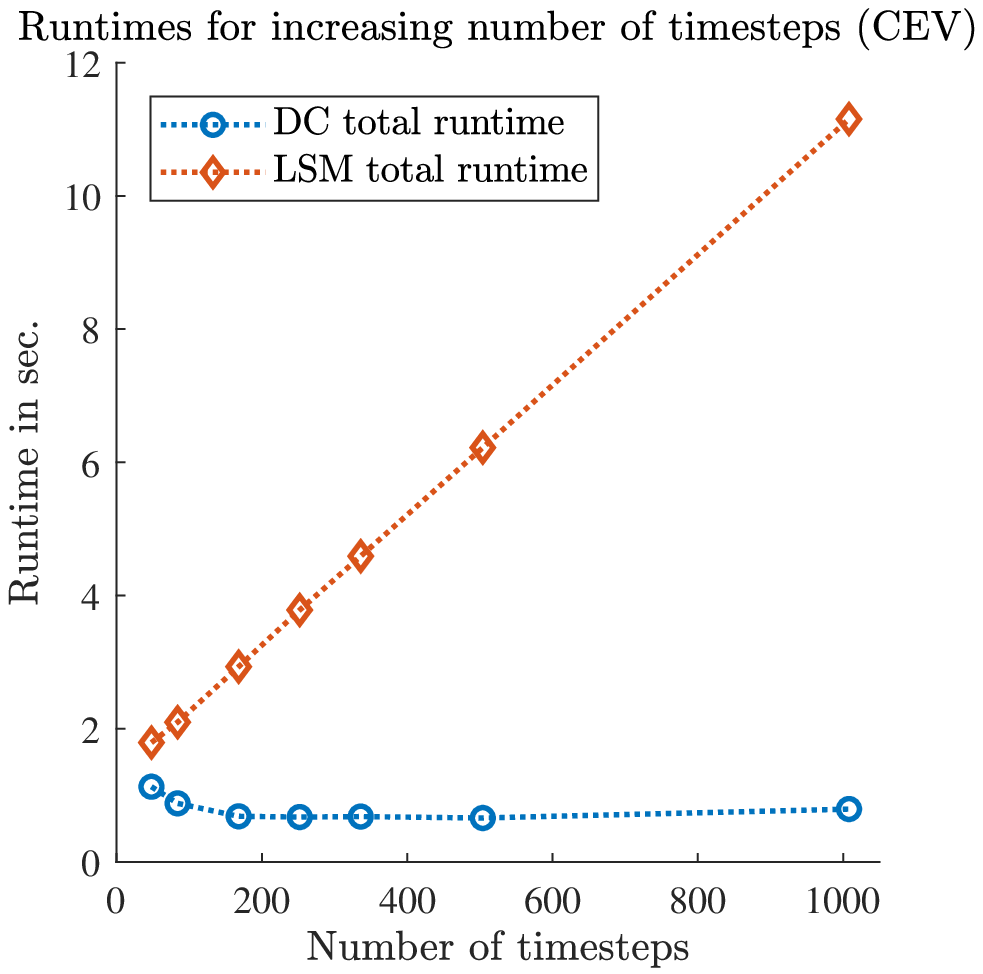} 
  \end{minipage} 
  \caption{Total runtime of the DC and the LSM method for an increasing number of options (left) and an increasing number of timesteps (right).}
  \label{fig:CEV_DC_LSM_runtime_details} 
\end{figure}

%

\FloatBarrier
\section{Conclusion and Outlook}
We have introduced a new approach to price American options via backward induction by approximating the value function with Chebyshev polynomials. Thereby, the computation of the conditional expectation of the value function in each time step is reduced to the computation of conditional expectations of polynomials. The proposed method separates the pricing of an option into a part which is model-dependent (the computation of the conditional expectations) and the pure pricing of a given payoff which becomes independent of the underlying model. The first step, the computation of the conditional expectation of the Chebyshev polynomials, is the so-called offline phase of the method. The design of the method admits several qualitative advantageous: 
\begin{itemize}
\item If the conditional expectations are set-up once, we can use them for the pricing of many different options. Thus, the actual pricing in the online step becomes very simple and fast. 
\item In the pre-computation step one can combine the method with different techniques, such as Fourier approaches and Monte-Carlo simulation. Hence the method can be applied in a variety of models.
\item The proposed approach is very general and flexible and thus not restricted to the pricing of American options. It can be used to solve a large class of optimal stopping problems.
\item We obtain a closed-form approximation of the option price as a function of the stock price at every time step. This approximation can be used to compute the option's sensitivities Delta and Gamma at no additional costs. This holds for all models and payoff profiles, even if Monte-Carlo is required in the offline phase.
\item The method is easy to implement and to maintain. The pre-computation step is well-suited for parallelization to speed-up the method.
\end{itemize}
We have investigated the theoretical error behaviour of the method and introduced explicit error bounds. We put particular emphasis on the combination of the method with Monte-Carlo simulation. Numerical experiments confirm that the method performs well for the pricing of American options. A detailed comparison of the method with the Least-Square Monte-Carlo approach proposed by \cite{longstaffschwartz} confirmed a high efficiency gain. Especially, when a high number of options is priced, for example a whole option price surface. In this case, the Dynamic Chebyshev method highly profits from the offline-online decomposition. Once the conditional expectations are computed they can be used to price options with different maturities and strikes. Besides the efficiency gain, the closed-form approximation of the price function is a significant advantage because it allows us to calculate the sensitivities. Since \cite{longstaffschwartz} introduced their method different modifications have been introduced. Either to increase efficiency or to tackle the sensitivities. For example the simulation algorithm of \cite{JainOosterlee2015} is comparable to LSM in terms of efficiency but is able to compute the Greeks at no additional costs. Moreover dual approaches were developed to obtain upper bounds for the option price, see \cite{Rogers2002} and more recently \cite{BelomestnyHaefnerUrusov2018}.\\

The presented error analysis of the method under an analyticity assumption is the starting point for further theoretical investigations in the case of piecewise analyticity and (piecewise) differentiability. The former allows to cover rigorously the American option pricing problem and a preliminary version is presented in \cite{Mahlstedt2017}. The qualitative merits of the method can be exploited in a variety of applications. \cite{GlauPachonPoetz2018} take advantage of the closed-form approximation to efficiently compute the expected exposure of early-exercise options as a step in CVA calculation. Moreover, the method can be used to price different options such as Barrier options, Swing options or multivariate American options.

\appendix 

\section{Proof of Theorem 3.3}
\begin{proof}
Consider a DPP as defined in Definition \ref{defin_DPP}, i.e. we have a Lipschitz continuous function
\begin{align*}
\vert f(x_{1},y_{1})-f(x_{2},y_{2}) \vert \leq L_{f}(\vert x_{1}-x_{2} \vert + \vert y_{1} - y_{2} \vert).
\end{align*}
Assume that the regularity Assumption \ref{assumption_analytic_value} and the assumption on the truncation error \eqref{trunc_err_bound} hold. Then we have to distinguish between the deterministic case \eqref{cond_exp_delta_bound} and the stochastic case \eqref{cond_exp_MC_bound}. In the first case, the expectation in the error bound can simply be ignored. First, we apply Proposition \ref{Cheby_Err_Distortion_multi}. At time point $T$ there is no random part and no distortion error. Thus,
\begin{align*}
\max_{x\in\mathcal{X}}\EE\left[|V_{T}(x)-\widehat{V}_{T}(x)|\right]=\max_{x\in\mathcal{X}}|V_{T}(x)-\widehat{V}_{T}(x)|\leq \varepsilon_{int}(\varrho_{t_{n}},N,D,M_{t_{n}}).
\end{align*}
For the ease of notation we will from know on write $\varepsilon^{j}_{int}=\varepsilon_{int}(\varrho_{t_{j}},N,D,M_{t_{j}})$. We obtain for the error at $t_{u}$
\begin{align}\label{eq:proof_error_bound_distortion}
\max_{x\in\mathcal{X}}\EE\left[|V_{t_u}(x)-\widehat{V}_{t_u}(x)|\right]
\leq \varepsilon^{u}_{int} + \Lambda_{\ol{N}}F(f,t_{u})
\end{align}
with maximal distortion error $F(f,t_{u})=\max_{k\in\mathcal{J}}\EE\left[|V_{t_{u}}(x^{k})-\widehat{V}_{t_{u}}(x^{k})|\right]$. Note that whether \eqref{cond_exp_delta_bound} or \eqref{cond_exp_MC_bound} hold an approximation error of the conditional expectation of $\widehat{V}_{t_{u+1}}$ is made, i.e.
$\EE[\widehat{V}_{t_{u+1}}(X_{t_{u+1}})\vert X_{t_u}=x^k]=\Gamma_{t_{u}}^{k}(\widehat{V}_{t_{u+1}})\approx\widehat{\Gamma}_{t_{u}}^{k}(\widehat{V}_{t_{u+1}})$. The Lipschitz continuity of $f$ yields
\begin{align*}
\left\vert V_{t_{u}}(x^k)- \widehat{V}_{t_{u}}(x^k)\right\vert
&=\Big\vert f\left(g(t_u,x^k), \Gamma_{t_{u}}^{k}(V_{t_{u+1}})\right)
\ - f\Big(g(t_u,x^k),\widehat{\Gamma}_{t_{u}}^{k}(\widehat{V}_{t_{u+1}})\Big)\Big\vert\\
&\leq L_{f}\Big(\Big\vert g(t_u,x^k)-g(t_u,x^k)\Big\vert 
\ +\Big\vert \Gamma_{t_{u}}^{k}(V_{t_{u+1}})-\ \widehat{\Gamma}_{t_{u}}^{k}(\widehat{V}_{t_{u+1}})\Big\vert\Big)\\
&= L_{f}\Big(\Big\vert \Gamma_{t_{u}}^{k}(V_{t_{u+1}})-\ \widehat{\Gamma}_{t_{u}}^{k}(\widehat{V}_{t_{u+1}})\Big\vert\Big)\\
&\leq L_{f}\Big(\Big\vert \Gamma_{t_{u}}^{k}(V_{t_{u+1}}1_{\mathcal{X}})-\Gamma_{t_{u}}^{k}(\widehat{V}_{t_{u+1}})\Big\vert + \Big\vert \Gamma_{t_{u}}^{k}(V_{t_{u+1}}1_{\mathbb{R}^{D}\setminus\mathcal{X}})\Big\vert\\
&\qquad + \Big\vert \Gamma_{t_{u}}^{k}(\widehat{V}_{t_{u+1}})-\widehat{\Gamma}_{t_{u}}^{k}(\widehat{V}_{t_{u+1}})\Big\vert\Big).
\end{align*}
Next, we consider the expectation for each of the three error terms. For the first term we obtain
\begin{align*}
\EE\left[\Big\vert \Gamma_{t_{u}}^{k}(V_{t_{u+1}}1_{\mathcal{X}})-\Gamma_{t_{u}}^{k}(\widehat{V}_{t_{u+1}})\Big\vert\right]&=\EE\left[\Big\vert \EE[V_{t_{u+1}}(X_{t_{u+1}})\1_{\mathcal{X}}-\widehat{V}_{t_{u+1}}(X_{t_{u+1}})\vert X_{t_u}=x^k]\Big\vert\right]\\
&\leq\max_{x\in\mathcal{X}}\EE\left[|V_{t_{u+1}}(x)-\widehat{V}_{t_{u+1}}(x)|\right]
=\varepsilon_{t_{u+1}}
\end{align*}
and for the second term we have
\begin{align*}
\EE\left[\Big\vert \Gamma_{t_{u}}^{k}(V_{t_{u+1}}1_{\mathbb{R}^{D}\setminus\mathcal{X}})\Big\vert\right]\leq\EE\left[\varepsilon_{tr}\right]= \varepsilon_{tr}.
\end{align*}
For the last term we have to distinguish two cases. If we assume \eqref{cond_exp_delta_bound} holds,
the operator norm yields
\begin{align*}
\Big\vert \Gamma_{t_{u}}^{k}(\widehat{V}_{t_{u+1}})-\widehat{\Gamma}_{t_{u}}^{k}(\widehat{V}_{t_{u+1}})\Big\vert&=\Big\vert \left(\Gamma_{t_{u}}^{k} - \widehat{\Gamma}_{t_{u}}^{k}\right)\Big(\widehat{V}_{t_{u+1}}\Big)\Big\vert\\
&\leq \Big\vert\Big\vert\Gamma_{t_{u}}^{k} - \widehat{\Gamma}_{t_{u}}^{k}\Big\vert\Big\vert_{op}\ \Big\vert\Big\vert\widehat{V}_{t_{u+1}}\Big\vert\Big\vert_{\infty}\\
&\leq \ol{\delta}\ \Big\vert\Big\vert\widehat{V}_{t_{u+1}}\Big\vert\Big\vert_{\infty}.
\end{align*}
Next, we consider the second case and assume that \eqref{cond_exp_MC_bound} holds. Then we have
\begin{align*}
\EE\left[\Big\vert \Gamma_{t_{u}}^{k}(\widehat{V}_{t_{u+1}})-\widehat{\Gamma}_{t_{u}}^{k}(\widehat{V}_{t_{u+1}})\Big\vert\right]\leq \Big\vert\Big\vert\Gamma_{t_{u}}^{k} - \widehat{\Gamma}_{t_{u}}^{k}\Big\vert\Big\vert_{op}\ \Big\vert\Big\vert\widehat{V}_{t_{u+1}}\Big\vert\Big\vert_{\infty}^{\star}
&\leq \delta^{\star}(M)\ \Big\vert\Big\vert\widehat{V}_{t_{u+1}}\Big\vert\Big\vert_{\infty}^{\star}.
\end{align*}
Hence in either case the following bound holds
\begin{align*}
\EE\left[\Big\vert \Gamma_{t_{u}}^{k}(\widehat{V}_{t_{u+1}})-\widehat{\Gamma}_{t_{u}}^{k}(\widehat{V}_{t_{u+1}})\Big\vert\right]\leq \varepsilon_{gm}\max_{x\in\mathcal{X}}\EE\left[|\widehat{V}_{t_{u+1}}(x)|\right]
\end{align*}
with $\varepsilon_{gm}=\ol{\delta}$ if assumption \eqref{cond_exp_delta_bound} holds and $\varepsilon_{gm}=\delta^{\star}(M)$ if assumption \eqref{cond_exp_MC_bound} holds. We need an upper bound for the maximum of the Chebyshev approximation
\begin{align*}
\max_{x\in\mathcal{X}}\EE\left[|\widehat{V}_{t_{u+1}}(x)|\right]&\leq \max_{x\in\mathcal{X}} \EE\left[|\widehat{V}_{t_{u+1}}(x)-V_{t_{u+1}}(x)|\right]+\max_{x\in\mathcal{X}}|V_{t_{u+1}}(x)|\leq \varepsilon_{t_{u+1}} + \ol{V}_{u+1}
\end{align*}
with $\ol{V}_{u+1}:=\max_{x\in\mathcal{X}}|V_{t_{u+1}}(x)|$. 
Hence, the error bound \eqref{eq:proof_error_bound_distortion} becomes
\begin{align*}
\varepsilon_{t_{u}}\leq\varepsilon^{u}_{int} +\Lambda_{\ol{N}} L_{f}\big((1+\varepsilon_{gm})\varepsilon_{t_{u+1}}+\varepsilon_{tr}+\varepsilon_{gm}\ol{V}_{u+1}\big).
\end{align*}
By induction, we now show \eqref{Error_t_k_Lip_delta}. For $u=n$ we have
$\varepsilon_{t_{n}}\leq\varepsilon^{n}_{int}$ and therefore \eqref{Error_t_k_Lip_delta} holds for u=n. We assume that for $n,\ldots,u+1$ equation \eqref{Error_t_k_Lip_delta} holds. For the error $\varepsilon_{t_{u}}$ we obtain
\begin{align*}
\varepsilon_{t_{u}}&\leq\varepsilon^{u}_{int} +\Lambda_{\ol{N}} L_{f}\big((1+\varepsilon_{gm})\varepsilon_{t_{u+1}}+\varepsilon_{tr}+\varepsilon_{gm}\ol{V}_{u+1}\big)\\
&\leq \varepsilon^{u}_{int} +\Lambda_{\ol{N}} L_{f}\Big((1+\varepsilon_{gm})\Big(\sum_{j=u+1}^{n}C^{j-(u+1)}\varepsilon^{j}_{int}\ +\ \Lambda_{\ol{N}}L_{f}\sum_{j=u+2}^{n}C^{j-(u+2)}(\varepsilon_{tr}+\varepsilon_{gm}\ol{V}_{j})\Big)\\
&\quad +\varepsilon_{tr}+\varepsilon_{gm}\ol{V}_{u+1}\Big)\\
&=\varepsilon^{u}_{int} + C\sum_{j=u+1}^{n}C^{j-(u+1)}\varepsilon^{j}_{int}\ + \Lambda_{\ol{N}} L_{f}\Big( C\sum_{j=u+2}^{n}C^{j-(u+2)}(\varepsilon_{tr}+\varepsilon_{gm}\ol{V}_{j})\\
&\quad + \varepsilon_{tr}+\varepsilon_{gm}\ol{V}_{u+1}\Big)\\
&=\varepsilon^{u}_{int} + \sum_{j=u+1}^{n}C^{j-u}\varepsilon^{j}_{int}\ + \Lambda_{\ol{N}} L_{f}\Big(\sum_{j=u+2}^{n}C^{j-(u+1)}(\varepsilon_{tr}+\varepsilon_{gm}\ol{V}_{j}) + \varepsilon_{tr}+\varepsilon_{gm}\ol{V}_{u+1}\Big)\\
&=\sum_{j=u}^{n}C^{j-u}\varepsilon^{j}_{int}\ + \Lambda_{\ol{N}} L_{f}\sum_{j=u+1}^{n}C^{j-(u+1)}(\varepsilon_{tr}+\varepsilon_{gm}\ol{V}_{j})
\end{align*}
which was our claim.
\end{proof}

\bibliographystyle{chicago}
  \bibliography{MMDiss_Literature}

\end{document}